\documentclass[11pt]{article} 
\usepackage{times}
\usepackage{amsmath,amsfonts}
\usepackage{epsfig,amssymb,amstext,xspace}
\usepackage{amsthm}
\usepackage[noend]{algorithmic}
\usepackage{algorithm}

\topmargin=-0.4in
\topskip=0pt
\headsep=15pt
\oddsidemargin=0pt 
\textheight=9in 
\textwidth=6.5in 
\voffset=0in

\newcommand{\E}[2][{}]{\ensuremath{\mathrm{E}_{#1}\bigl[#2\bigr]}}   
\newcommand{\Pro}[1]{\ensuremath{\Pr\bigl[#1\bigr]}}

\newcommand{\np}{{\em NP}\xspace}
\newcommand{\nphard}{\np-hard\xspace} 

\newcommand{\apx}{{\em APX}\xspace}

\newtheorem{theorem}{Theorem}[section]
\newtheorem{fact}[theorem]{Fact}
\newtheorem{lemma}[theorem]{Lemma}
\newtheorem{corollary}[theorem]{Corollary}
\newtheorem{myclaim}[theorem]{Claim}{\bfseries}{\itshape}
\newtheorem{proposition}[theorem]{Proposition}

\newtheorem{definition}[theorem]{Definition}

{\theoremstyle{remark} \newtheorem{remark}[theorem]{Remark}}
{\theoremstyle{definition}

}

\newcommand{\bg}[1]{\medskip\noindent{\bf #1}}

\newenvironment{proofof}[1]{\begin{proof}[Proof of #1]}{\end{proof}}

\def\EE{\mathbb E}

\newcommand{\hide}[1]{}

\newcommand{\cP}{\ensuremath{\mathcal{P}}}

\newcommand{\cR}{\ensuremath{\mathcal{R}}}
\newcommand{\cE}{\ensuremath{\mathcal{E}} }

\newcommand{\R}{\ensuremath{\mathbb R}}
\newcommand{\Z}{\ensuremath{\mathbb Z}}

\newcommand{\A}{\ensuremath{\mathcal{A}}}

\newcommand{\Hc}{\ensuremath{\mathcal{H}}}
\newcommand{\I}{\ensuremath{\mathcal I}}

\newcommand{\Sc}{\ensuremath{\mathcal S}}
\newcommand{\Pc}{\ensuremath{\mathcal P}}
\newcommand{\Rc}{\ensuremath{\mathcal R}}
\newcommand{\V}{\ensuremath{\mathcal V}}

\newcommand{\OPT}{\ensuremath{\textsc{Opt}}}

\newcommand{\frall}{\ensuremath{\text{ for all }}}

\newcommand{\sse}{\ensuremath{\subseteq}}
\newcommand{\sm}{\ensuremath{\setminus}}
\newcommand{\es}{\ensuremath{\emptyset}}

\newcommand{\poly}{\operatorname{poly}}

\newcommand{\e}{\ensuremath{\epsilon}}
\def\eps{\varepsilon}
\newcommand{\gm}{\ensuremath{\gamma}}

\newcommand{\al}{\ensuremath{\alpha}}
\newcommand{\ld}{\ensuremath{\lambda}}
\newcommand{\tht}{\ensuremath{\theta}}
\newcommand{\Tht}{\ensuremath{\Theta}}
\def\br#1{{{(#1)}}}

\newcommand{\hx}{\ensuremath{\hat x}}

\newcommand{\bone}{\ensuremath{\boldsymbol{1}}}

\newcommand{\round}{\ensuremath{\mathsf{Round}}}
\newcommand{\opt}{\ensuremath{\mathsf{opt}}}
\newcommand{\swm}{\ensuremath{\mathrm{SWM}}}

\newcommand{\plp}[1]{\ensuremath{(\text{\ref{ca-p}}_{#1})}}
\newcommand{\dlp}[1]{\ensuremath{(\text{\ref{ca-d}}_{#1})}}

\title{Approximation Algorithms for Non-Single-minded Profit-Maximization
  Problems with Limited Supply\footnote{A short version of this paper appeared in the Proceedings of the 6th International Workshop on Internet and Network Economics (WINE 2010).}} 
\author{Khaled Elbassioni \thanks{Masdar Institute of Science and Technology, P.O.Box 54224, Abu Dhabi, UAE;
(kelbassioni@masdar.ac.ae).} 
\and 
Mahmoud Fouz \thanks{FR Informatik, Universit\"at des Saarlandes, D-66123, Saarbr\"ucken, Germany; (fmahmoud@mpi-inf.mpg.de).}
\and
Chaitanya Swamy
\thanks{Dept. of Combinatorics and Optimization, Univ. Waterloo, Waterloo, ON N2L 3G1; (cswamy@math.uwaterloo.ca).
Supported by NSERC grant 327620-09 and an Ontario Early Researcher Award.}  
}

\date{}

\begin{document}

\maketitle
\def\thepage{}
\thispagestyle{empty}

\begin{abstract}
We consider {\em profit-maximization} problems for {\em combinatorial auctions} with 
{\em non-single minded valuation functions} and {\em limited supply}. 
There are $n$ customers and $m$ items, each of which is available is in some limited
supply or capacity. Each customer $j$ has a value $v_j(S)$ for each subset $S$ of
items specifying the maximum amount she is willing to pay for that set (with
$v_j(\es)=0$). A feasible solution to the profit-maximization problem consists of item
prices 
and an allocation $(S_1,\ldots,S_n)$ of items to customers such that (i) the price of the
set $S_j$ assigned to $j$ is at most $v_j(S_j)$, and (ii) the number of customers who are
allotted an item is at most its capacity. The goal is find a feasible
solution that maximizes the total
profit earned by selling items to customers.

We obtain fairly general results that relate the approximability of the
profit-maximization problem to that of the corresponding {\em social-welfare-maximization}
(SWM) problem, which is the problem  of finding an allocation $(S_1,\ldots,S_n)$
satisfying the capacity constraints that has maximum total value $\sum_j v_j(S_j)$. 
For {\em subadditive valuations} 
(and hence {\em submodular, XOS valuations}), we obtain a solution with profit 
$\OPT_\swm/O(\log c_{\max})$, where $\OPT_\swm$ is the optimum social welfare and
$c_{\max}$ is the maximum item-supply; thus, this yields an $O(\log
c_{\max})$-approximation for the profit-maximization problem. 
Furthermore, given {\em any} class of valuation functions, if
the SWM problem for this valuation class has an LP-relaxation (of a certain form) and an
algorithm ``verifying'' an {\em integrality gap} of $\al$ for this LP, then we 
obtain a solution with profit $\OPT_\swm/O(\al\log c_{\max})$, thus obtaining an
$O(\al\log c_{\max})$-approximation. 

The latter result immediately yields an $O(\sqrt m\log c_{\max})$-approximation for
the profit maximization problem for combinatorial auctions with 
{\em arbitrary valuations}. As another application of this result, we consider the
non-single-minded tollbooth problem on trees (where items are edges of a tree, and
customers desire paths of the tree). We devise an $O(1)$-approximation algorithm for the
corresponding SWM problem satisfying the desired integrality-gap requirement, and thereby
obtain an $O(\log c_{\max})$-approximation for the non-single-minded tollbooth problem on
trees. For the special case, when the tree is a path, we also obtain an incomparable
$O(\log m)$-approximation (via a different approach) for subadditive valuations, and
arbitrary valuations with unlimited supply. Our approach for the latter problem also gives an $\frac{e}{e-1}$-approximation algorithm for the multi-product
  pricing problem in the Max-Buy model, with limited supply, improving on the previously known approximation factor of 2.
\end{abstract}

\section{Introduction}\label{Intro}
Profit (or revenue) maximization is a classic and fundamental economic goal, and the
design of computationally-efficient item-pricing schemes for 
various profit-maximization problems has received much recent attention%
~\cite{AFMZ04,GHKKKM05,BB07,AH06,BK06,BBM08}.     
We study the algorithmic problem of {\em item-pricing for profit-maximization} for 
{\em general} (multi unit) {\em combinatorial auctions} (CAs) with {\em limited supply}.  
There are $n$ customers and $m$ items. Each item is available is in some limited supply or
capacity, and each customer $j$ has a value $v_j(S)$ for each subset $S$ of
items specifying the maximum amount she is willing to pay for that set (with
$v_j(\es)=0$).  
Given a pricing of the items, a \emph{feasible allocation} is an assignment of a (possibly 
empty) subset $S_j$ to each customer $j$ satisfying (i) the {\em budget constraints},
which require that the price of $S_j$ (i.e., the total price of the items in $S_j$) is at
most $v_j(S_j)$, and (ii) the {\em capacity constraints}, which stipulate that the number
of customers who are allocated an item be at most the supply of that item.
The objective is to determine item prices that maximize the total profit or revenue earned
by selling items to customers.
Guruswami et al.~\cite{GHKKKM05} introduced the {\em envy-free}
version of the problem, where there is the additional constraint that the set assigned to 
a customer must maximize her utility (defined as value$-$price). 
Item pricing has an appealing simplicity and enforces a basic notion of fairness wherein
the seller does not discriminate between customers who get the same item(s). 
Our focus on item pricing is in keeping with the vast majority of 
work on algorithms for profit-maximization (e.g., the above references; in fact, with
unlimited supply and unit-demand valuations, our problem reduces to the {\em Max-Buy}
model in~\cite{AFMZ04}). 
Various current trading practices are described by item pricing, and thus it
becomes pertinent to understand what guarantees are obtainable via such schemes.  
Profit-maximization problems are typically \nphard, 
so we will be interested in designing approximation algorithms for these
problems. Throughout, a $\rho$-approximation algorithm for a maximization
problem, where $\rho\geq 1$, denotes a polytime algorithm that returns a solution
of value at least $(\text{optimum value})/\rho$.  

The framework of combinatorial auctions is an extremely rich framework that encapsulates a 
variety of applications. In fact, recognizing the generality of the envy-free
profit-maximization problem for CAs, Guruswami et al. proceeded to study various 
more-tractable special cases of the problem. In particular, they introduced the following 
two structured problems in the {\em single-minded} (SM) setting, where each customer desires a
single fixed set: 
(a) the {\em tollbooth problem} where the items are edges of a graph and the
customer-sets correspond to paths in this graph, which can be interpreted as the problem
of pricing transportation links or network connections. 
(b) a further special case called the {\em highway problem} where the graph is a path,
which can also be motivated from a scheduling perspective (the path corresponds to a
time-horizon). 
The non-SM versions of even such structured problems can be used to capture various 
interesting scenarios. 
%
For instance, in a computer network, users may consider different possibilities
for connecting to the network, and the price they are willing to pay may depend on where
they connect. 
The goal is to determine how to price the bandwidth along the network links so as to
maximize the profit obtained. 
For an application of the non-single minded highway problem, consider 
customers who are interested in executing their jobs on a machine(s) (or using a service,
such as a hotel room). A customer is willing to pay for this service, but the amount
paid depends on when her job is scheduled. 
We want to price the time units and schedule the jobs so as to maximize the
profit. 

\medskip
\noindent {\bf Our results.\ }
We obtain fairly general {\em polytime approximation guarantees}
for profit-maximization problems involving {\em combinatorial auctions} with  
{\em limited supply} and {\em non-single-minded} valuations. 
We obtain results for both (a) certain structured valuation classes,
namely {\em subadditive valuations} (where $v(A)+v(B)\geq v(A\cup B)$ for any two sets $A,B$)  
and hence, {\em submodular} valuations, which have been intensely studied
recently~\cite{DS06,F06,BBM08,V08,CHK09}; and (b) {\em arbitrary valuations}.
Our results relate the approximability of the profit-maximization
problem to that of the corresponding 
{\em social-welfare-maximization} (SWM) problem, which is the problem  of finding an
allocation $(S_1,\ldots,S_n)$ satisfying the capacity constraints that has maximum total
value $\sum_j v_j(S_j)$. 
Our main theorem, stated informally below and proved in Section~\ref{alg}, shows that any
LP-based approximation algorithm that provides an integrality-gap bound for the SWM
problem with a given class of valuations, can be leveraged to obtain a corresponding
approximation guarantee for the profit-maximization problem with that class of valuations.  
Let $c_{\max}\leq n$ denote the maximum item supply, 
and $\OPT_\swm$
denote the optimum value of the SWM problem, which is clearly an upper bound on the
maximum profit achievable.  
\begin{theorem}[{\bf Informal statement}] \label{results}
(i) For the class of subadditive (and hence submodular) valuations, one can
obtain a solution with profit $\OPT_{\swm}/O(\log c_{\max})$. 

\noindent
(ii) Given any class of valuations for which the corresponding SWM problem admits a
packing-type LP relaxation with an integrality gap of $\al$ as  ``verified'' by an
$\al$-approximation algorithm, one can obtain a solution with profit
$\OPT_{\swm}/O(\al\log c_{\max})$. 
\end{theorem}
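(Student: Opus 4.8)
The plan is to reduce the profit-maximization problem to the SWM problem by a two-stage argument: first relate profit to the LP optimum of the SWM relaxation, and then use a randomized-pricing scheme (a Walrasian-style pricing coupled with the integrality-gap-verifying algorithm) to extract a constant fraction of that LP value. Concretely, let $(x,y)$ be an optimal fractional solution to the assumed packing-type LP for SWM, with LP-value at least $\OPT_\swm$. Since the algorithm verifies an integrality gap of $\al$, we have $\OPT_\swm \le \al \cdot \mathrm{LP}$, so it suffices to construct prices achieving profit $\mathrm{LP}/O(\log c_{\max})$. The first step is therefore to use the LP solution to define, for each item $i$, a price $p_i$, and for each customer $j$ a "target bundle"; the natural choice is to pick a random threshold and set the bundle assigned to $j$ to be the set fractionally allocated to $j$ at that threshold, which by the structure of the LP and the valuation class is something $j$ can afford up to value $v_j(S_j)$.

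Second, I would handle the capacity constraints. Because each item has supply $c_i \le c_{\max}$, a customer who is "willing to buy" an item at the posted price should be admitted only if the item is not yet sold out. The key quantitative step is a bucketing over price scales: partition the contribution to the LP value into $O(\log c_{\max})$ geometric classes according to how many customers are (fractionally) demanding each item, and argue that in the dominant class, a random subsampling of customers at the right rate simultaneously (a) respects all capacity constraints with good probability (or in expectation), and (b) retains an $\Omega(1)$ fraction of the LP value in that class. This is the standard "pick the best of $O(\log c_{\max})$ buckets, lose a factor $O(\log c_{\max})$" maneuver, combined with a Chernoff/Markov bound to control overselling. Averaging over the random threshold and the random subsample, and then derandomizing or simply taking the best outcome, yields a deterministic polytime algorithm with the claimed profit guarantee $\mathrm{LP}/O(\log c_{\max}) \ge \OPT_\swm/O(\al\log c_{\max})$.

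The third ingredient, specific to part (ii) as opposed to the subadditive case, is that we do not have an explicit combinatorial handle on the valuations, only the LP and the approximation algorithm; so the "target bundles" must come out of the rounding performed by that very algorithm. I would run the $\al$-approximation algorithm on a modified instance (e.g.\ with capacities scaled down by the subsampling rate, or with customers restricted to the chosen bucket), obtain an integral allocation of value $\ge \mathrm{LP}_{\text{bucket}}/\al$, and then show that the prices derived from the bucket scale can be charged against this allocation: every customer in the allocation is assigned a bundle whose posted price is at most a constant fraction of $v_j(S_j)$, so her budget constraint is satisfied and she contributes that price to the profit. The capacity constraints hold because we ran the algorithm on the scaled-down instance. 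Summing the collected prices over all allocated customers gives the bound.

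The main obstacle I anticipate is the simultaneous control of three competing requirements in the dominant bucket: affordability (the posted price must not exceed $v_j(S_j)$, which forces prices to be \emph{small}), profitability (the total collected price must be $\Omega(\mathrm{LP}_{\text{bucket}})$, which pushes prices to be \emph{large}), and feasibility of capacities after rounding. Getting all three to hold requires choosing the price scale and the subsampling probability carefully as functions of the bucket index, and the "packing-type LP relaxation (of a certain form)" hypothesis in the theorem is presumably exactly the structural property that makes this balancing possible — in particular it should guarantee that at the right threshold the fractional bundle a customer receives is affordable at the posted prices up to a bounded factor. I would expect the bulk of the technical work to be in formalizing that hypothesis and verifying that the $\al$-approximation algorithm, run on the scaled instance, interacts correctly with the posted prices; the $O(\log c_{\max})$ loss itself is routine once the bucketing is set up.
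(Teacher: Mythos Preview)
Your proposal has the right high-level shape---LP, an $O(\log c_{\max})$ bucketing, and a rounding step that invokes the $\alpha$-approximation algorithm---but it is missing the key structural idea that makes the argument go through, and the specific pricing mechanism you suggest would not work.

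The central missing ingredient is that the item prices should be the \emph{optimal dual variables} $y_e^*$ attached to the supply constraints of the LP. Complementary slackness then gives affordability for free: whenever $x^*_{j,S}>0$ one has $\sum_{e\in S}y^*_e \le v_j(S)$ (and for subadditive $v_j$ this even extends to every $T\subseteq S$). Your ``random threshold'' pricing has no such guarantee; the obstacle you yourself flag at the end---making prices simultaneously small enough for affordability and large enough for revenue---is exactly what dual prices resolve, and nothing in your plan substitutes for it.

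Second, the bucketing in the paper is not over demand levels or price scales but over \emph{capacity vectors}: one re-solves the LP and its dual with capacities $k^1=\bone,k^2,\ldots,k^\ell=c$ increasing geometrically, and a telescoping argument (Cheung--Swamy) shows that for some $k=u$ the dual revenue $\sum_e u_e y_e^{(u)}$ is at least $\bigl(\OPT(c)-\OPT(\bone)\bigr)/O(\log c_{\max})$. The residual term $\OPT(\bone)$ is handled by a separate, easy case: with unit capacities the allocated sets are disjoint, so any allocation can be priced at full value, and an $\alpha$-approximation to the unit-capacity SWM problem directly yields profit $\OPT(\bone)/\alpha$. Your outline has no analogue of this two-case split, and ``subsampling customers plus Chernoff'' does not replace it.

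Third, for the rounding you need more than ``run the $\alpha$-approximation on a scaled instance'': to use the dual prices you must round $x^{(u)}$ to an integer allocation that only assigns sets $S$ with $x^{(u)}_{j,S}>0$ (else affordability breaks). The paper gets this via the Carr--Vempala decomposition: an integrality-gap-verifying $\alpha$-approximation lets one write $x^{(u)}/\alpha$ as a convex combination of feasible integer solutions, each supported on the support of $x^{(u)}$; sampling from this convex combination gives the desired allocation with expected revenue $\sum_e u_e y_e^{(u)}/\alpha$. Your ``run the algorithm on a modified instance'' step produces an allocation with no control over its support, so the posted prices need not be affordable there.
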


\noindent
(Part (ii) above does not imply part (i), because for part
(ii) we require an integrality-gap guarantee 
which, roughly speaking, means that we require an algorithm that returns a
``good'' solution for {\em every} profile of $n$ valuations.) 

A key notable aspect of our theorem is its versatility. 
One can simply ``plug in'' various known (or easily derivable) results about the SWM
problem to obtain approximation algorithms for various limited-supply profit-maximization
problems. 
For example, as {\em corollaries} of part (ii) of our theorem, we
obtain an $O(\sqrt m\log c_{\max})$-approximation for profit-maximization for
combinatorial auctions with {\em arbitrary valuations}, and an 
$O(\log c_{\max})$-approximation for the {\em non-single-minded tollbooth problem} on
trees (see Section~\ref{apps}). 
The first result follows from the various known $O(\sqrt m)$-approximation
algorithms for the SWM problem for CAs with arbitrary valuations that
also bound the integrality gap~\cite{P88,KS04}. For the second result, we devise a
suitable $O(1)$-approximation for the SWM problem corresponding to non-single-minded
tollbooth on trees, by adapting the randomized-rounding approach of Chakrabarty et
al.~\cite{CCGK07}. 

Notice that with bundle-pricing, which is often used in the context of mechanism design
for CAs, the profit-maximization problem becomes equivalent to the SWM problem. Thus, 
our results provide worst-case bounds on how item-pricing (which may be viewed as a
fairness constraint on the seller) diminishes the revenue of the seller versus 
bundle-pricing.   
It is also worth remarking that our algorithms for an arbitrary valuation class (i.e.,
part (ii) above) can be modified in a simple way to return prices and an allocation
$(S_1,\ldots,S_n)$ with the following $\e$-``one-sided envy-freeness'' property while
diminishing the profit by a $(1-\e)$-factor: for every non-empty $S_j$, the utility that $j$
obtains from $S_j$ is at least $\e$ times the maximum utility $j$ may obtain from any
set (see Remark~\ref{algprop}). 

The only previous guarantees 
for limited-supply CAs with 
a general valuation-class are those obtained via a reduction in~\cite{AH06},
showing that an $\al$-approximation for the SWM    
problem and an algorithm for the unlimited-supply SM problem that returns profit at least 
$\OPT_\swm/\beta$ yield an $\al\beta$-approximation. 
A simple ``grouping-by-density'' approach gives $\beta=O(\log m+\log n)$; using the
best known bound on $\beta$~\cite{BK06} yields 
an $O\bigl(\al(\log m+\log c_{\max})\bigr)$ guarantee, which is significantly weaker than   
our guarantees. 
(E.g., we obtain an $O(\al)$-approximation for constant $c_{\max}$.) 
The $O(\log c_{\max})$-factor we incur is unavoidable if one compares the
profit against the optimal social welfare: a well-known example with one item,
$n=c_{\max}$ players 
shows a gap of
$H_{c_{\max}}:=1+\frac{1}{2}+\cdots+\frac{1}{c_{\max}}$ between the optima of the 
SWM- and profit-maximization problems. 
Almost all results for profit-maximization for CAs with non-SM
valuations also compare against the optimum social welfare, so they also incur this factor.
Also, it is easy to see that with $c_{\max}=1$, the profit-maximization problem reduces to
the SWM problem, so an inapproximability result for the SWM problem also yields an
inapproximability result for our problem. 
Thus, 
we obtain an $m^{\frac{1}{2}}$-, or $n$-, inapproximability for CAs with even SM valuations
(see, e.g.,~\cite{GLSU06}), and \apx-hardness for CAs with subadditive, submodular 
valuations, and the tollbooth problem on trees.

The proof of Theorem~\ref{results} is based on considering a natural
LP-relaxation \eqref{ca-p} for the SWM problem and its dual. A crucial observation is
that an optimal primal solution combined with the optimal values of the dual variables
corresponding to the primal supply constraints can be seen as furnishing a ``feasible''
solution with a {\em fractional} allocation to (even) the (envy-free) profit-maximization problem.
\cite{CS08} utilized this observation to design an approximation algorithm
for the {\em single-minded envy-free} profit-maximization problem. 
But even with unit capacities and {\em one} non-single-minded customer, 
there is an $\Omega(m)$-factor gap between the optimum (integer or fractional)
social-welfare and the optimum profit achievable by an envy-free pricing  
(see, e.g.,~\cite{BBM08}). Our approach is similar to the one in~\cite{CS08}, but as
suggested by the above fact, we need new ingredients to exploit the greater
flexibility afforded by the profit-maximization problem (vs. the envy-free problem)
and turn the above observation into an approximation algorithm 
{\em even for non-single-minded valuations}. 
As in~\cite{CS08}, we argue that there must be an optimal dual solution with
suitable, possibly lowered, item-capacities yielding profit (with the fractional
allocation) comparable to $\OPT_\swm$. A suitable rounding of the optimal primal
solution with these capacities 
then yields a good allocation, which combined with the prices
obtained yields the desired approximation bounds. 
Here, for part (ii) of Theorem~\ref{results}, we leverage a decomposition technique
of~\cite{CV02}. 
Thus, our work shows that 
(in contrast to the envy-free setting) for profit-maximization problems, one can obtain a
great deal of mileage from the LP-relaxation of the SWM problem 
and exploit LP-based techniques to obtain guarantees 
even for various non-single-minded valuation classes. 



In Section~\ref{s-line}, we consider an alternate approach for the 
non-SM highway problem that (does not use $\OPT_{\swm}$ as an upper bound and)
achieves an (incomparable) $O(\log m)$-approximation factor. 
We decompose the instance via an exponential-size {\em configuration LP}, which is solved 
approximately using the ellipsoid method and rounded via randomized rounding. 
Here, we use LP duality to handle dependencies arising from the non-SM setting. 

\begin{theorem}\label{t-line}
There is an $O(\log m)$-approximation algorithm for the non-single-minded highway problem 
with (i) subadditive valuations with limited supply; and 
(ii) arbitrary valuations with unlimited supply. 
\end{theorem}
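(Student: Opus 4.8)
The plan is to obtain Theorem~\ref{t-line} by a decomposition-plus-randomized-rounding scheme built on an exponential-size \emph{configuration LP} for the non-single-minded highway problem. First I would set up the configuration LP: the ground set is the set of edges of the path (the ``items''), each with its capacity $c_e$; a configuration for customer $j$ is a pair $(S,p)$ where $S$ is a subpath she could be assigned and $p\le v_j(S)$ is the revenue collected from her. The LP has a variable $x_{j,S,p}\ge 0$ for each customer and configuration, with assignment constraints $\sum_{S,p}x_{j,S,p}\le 1$ for each $j$ and capacity constraints $\sum_{j}\sum_{S\ni e,\,p}x_{j,S,p}\le c_e$ for each edge $e$; the objective is to maximize $\sum_{j,S,p} p\cdot x_{j,S,p}$. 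Its optimum dominates $\OPT$ for the integer profit-maximization problem, so it suffices to round it within $O(\log m)$.

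The next step is to solve this LP (approximately). Since it has exponentially many variables, I would work with the dual, which has a variable $\beta_j$ per customer and $\alpha_e$ per edge, and a separation oracle amounts to: given $(\alpha,\beta)$, decide whether there is a customer $j$ and a configuration $(S,p)$ with $p - \sum_{e\in S}\alpha_e > \beta_j$, i.e.\ whether $\max_{S}\bigl(v_j(S) - \sum_{e\in S}\alpha_e\bigr) > \beta_j$ for some $j$ --- this is a ``utility maximization'' query. For part (ii), unlimited supply means the $\alpha_e$ are $0$ and the oracle is trivial; more generally, with subadditive valuations and limited supply (part (i)), I would invoke the demand-oracle / value-oracle machinery standard for subadditive functions to answer this query (this is where the restriction to subadditive valuations in part~(i) is used), running the ellipsoid method to get a $(1+\eps)$-approximate primal solution supported on polynomially many configurations. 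Here I expect to lean on LP duality precisely to cope with the non-single-minded dependency: a customer contributes to many configurations, and duality lets the separation oracle pick the \emph{best} response per customer rather than committing globally.

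Given the (approximately optimal) fractional solution $\{x_{j,S,p}\}$, the core step is randomized rounding. I would treat the $x_{j,S,p}$ as probabilities of assigning set $S$ at price $p$ to customer $j$ (after normalizing so $\sum_{S,p}x_{j,S,p}\le 1$), independently across customers, then fix capacity violations. The price collected on an edge $e$, and the number of customers using $e$, concentrate around their expectations $\le c_e$; the expected rounded profit is $\Omega(\OPT_{LP})$. The $O(\log m)$ loss enters through handling the budget/capacity constraints on a path of $m$ edges: I would either (a) bucket customers by the ``density'' or length of their desired subpaths into $O(\log m)$ classes, round within each class where the structure is cleaner, and keep the best class (losing $O(\log m)$); or (b) use a Chernoff-plus-union-bound argument over the $m$ edges to scale down assignment probabilities by an $O(\log m)$ factor so that, with high probability, \emph{no} edge is over capacity, then show the surviving assignment still satisfies every customer's budget constraint because we only assign $(S,p)$ pairs that were feasible in the LP support. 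Subadditivity is again used to argue that shrinking an assigned set does not increase the price demanded of a customer, so partial/repaired allocations remain budget-feasible.

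The main obstacle I anticipate is the rounding step, specifically reconciling the \emph{budget} (envy-type) constraints with the \emph{capacity} constraints simultaneously under the non-single-minded structure: unlike the single-minded highway problem, a customer's realized payment depends on \emph{which} subpath survives the capacity-repair, and a subpath that was individually rational in the LP may become over-priced relative to $v_j$ of a shrunken set if we naively drop edges --- so the repair procedure must drop whole customers (or whole configurations), not individual edges, which is what forces the $O(\log m)$ bucketing. Making the probabilistic analysis go through on a path (exploiting its laminar/interval structure so that the relevant events are nearly independent) while preserving both feasibility types is the delicate part; the rest (setting up the configuration LP, the ellipsoid argument, the expectation computation) is routine given the tools cited earlier in the paper.
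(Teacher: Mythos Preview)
Your configuration LP has a fundamental gap: it does not encode the \emph{item-pricing} constraint at all. A variable $x_{j,S,p}$ records only that customer $j$ receives subpath $S$ and pays total amount $p\le v_j(S)$; nothing forces the amounts paid by different customers to be consistent with a single vector of edge prices $\{p_e\}$. Consequently, for any fixed $(j,S)$ the LP will always prefer $p=v_j(S)$, and your LP optimum is exactly $\OPT_{\swm}$, not the profit optimum. (One edge, capacity $2$, two customers with values $1$ and $2$: your LP achieves $3$, but any item price yields profit at most $2$.) The paper explicitly notes an $\Omega(H_{c_{\max}})$ gap between $\OPT_{\swm}$ and the profit optimum, so an $O(\log m)$ guarantee cannot be obtained relative to this relaxation; and your rounding never produces edge prices, only per-customer payments that in general cannot be realized by any $\{p_e\}$. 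The separation oracle you describe (maximize $v_j(S)-\sum_{e\in S}\alpha_e$) likewise has nothing to do with item pricing.

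The paper's route is structurally different and is where the real work lies. The $O(\log m)$ loss comes \emph{first}, from decomposing the interval family into $O(\log m)$ layers, each a union of item-disjoint cliques (Proposition~\ref{pro:decomposition}); it does not come from rounding or Chernoff over $m$ edges. Within one such layer, each clique is trimmed to a one-sided half-clique (losing a factor $2$; this is where subadditivity, respectively unlimited supply, is used), prices are discretized, and only then is a configuration LP written---but a configuration is an \emph{entire feasible solution} $R\in\cR_i$ (a pricing of the half-clique's intervals together with an allocation), so item-pricing consistency is baked into each variable. The dual separation oracle becomes a ``voucher-pricing'' problem on a single half-clique, solved by a DP (a $2$-approximation suffices), and the resulting fractional solution is rounded with a $(1-1/e)$ loss. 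The decoupling that makes this work is that half-cliques are item-disjoint, so prices chosen independently per half-clique automatically assemble into a global item pricing; your LP has no analogue of this step.
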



It is worth noting that the non-SM highway problem with subadditive valuations 
can be used to capture some \emph{multi-product
  pricing} problems in the so-called {\em Max-Buy} model (a customer buys the most expensive
product she can afford), with or without a \emph{price ladder}, considered by Aggarwal et
al.~\cite{AFMZ04}. (Indeed, the case without a ladder (resp., with a ladder) can be modeled by a set of disjoint intervals (resp., a Laminar set of intervals sharing the right end-point), where customers' valuations are defined on each of these intervals). In fact, our algorithm in Theorem~\ref{t-line} is based on combining ideas from the PTAS for the version with a price ladder in \cite{AFMZ04}, and the $\frac{e}{e-1}$-approximation algorithm for the one without a ladder. 
We observe that our method gives the following result\footnote{This result was recently rediscovered in \cite{FS13}.} for the multi-product pricing problem, which improves on the 2-approximation result in \cite{BK07}.
\begin{corollary}\label{c1}
There is an $\frac{e}{e-1}$-approximation algorithm for the multi-product
  pricing problem in the Max-Buy model, with limited supply. 
\end{corollary}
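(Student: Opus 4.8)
The plan is to adapt the configuration-LP / randomized-rounding machinery behind Theorem~\ref{t-line} (Section~\ref{s-line}) to the degenerate case in which the ``highway'' is a collection of $m$ disjoint single edges, one per product, with customer $j$'s value for the one-edge path at product $i$ equal to her value $v_{ij}$ for product $i$ (and $0$ for every longer path, so the valuation is trivially subadditive) and supplies $c_i$; via this encoding the Max-Buy multi-product pricing problem with limited supply is the disjoint-interval special case of the non-single-minded highway problem. The task is to show that on such instances the scheme loses only the constant $\tfrac{e}{e-1}$ rather than the $O(\log m)$ of Theorem~\ref{t-line}, and in particular that the output is a genuine Max-Buy solution (prices, under which customers buy the priciest product they can afford, with the seller keeping any $c_i$ of the customers who come to product $i$).

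First I would restrict prices to the polynomial-size candidate set of distinct values $v_{ij}$ (with a fixed tie-breaking convention, or perturbed infinitesimally below, which is without loss for the revenue), and write down the relevant relaxation: the assignment-type LP with variables $x_{i,p}$ (``product $i$ priced at $p$'', $\sum_p x_{i,p}\le 1$) and $z_{j,i,p}\le x_{i,p}$ (``customer $j$ buys $i$ at $p$''), with $z_{j,i,p}=0$ unless $v_{ij}\ge p$, the per-customer constraint $\sum_{i,p} z_{j,i,p}\le 1$, and the supply constraints $\sum_j\sum_p z_{j,i,p}\le c_i$, maximizing $\sum_{j,i,p} p\,z_{j,i,p}$. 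The optimal Max-Buy solution, read off as its prices together with its purchase pattern, is a feasible integral point of value exactly the optimum revenue $R^\star$, so the LP optimum bounds $R^\star$ from above and is computable in polynomial time; the per-customer constraint $\sum_{i,p} z_{j,i,p}\le 1$ is exactly what keeps this relaxation tight --- without it a single customer would be double-counted across every product she values.

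Next comes the rounding and the source of $\tfrac{e}{e-1}$. From an optimal LP solution, draw independently for each product $i$ a price from the marginal $\{x^\star_{i,p}\}_p$ (offering nothing with the leftover probability), let every customer buy under the Max-Buy rule at the resulting prices, and at each product serve any set of at most $c_i$ arriving customers. There are two sources of loss against the LP value: (a) a customer may be \emph{diverted} from the pair $(i,p)$ that the LP assigned her to a strictly costlier offered product --- governed by the ``first offered among her affordable products'' geometric event, which already costs $\tfrac{e}{e-1}$ with unlimited supply (this is precisely the no-price-ladder Max-Buy analysis of~\cite{AFMZ04}); and (b) a product may be over-subscribed --- governed by the correlation gap of the rank function $\min(\cdot,c_i)$ of a uniform matroid, which is again $\tfrac{e}{e-1}$. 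The content of the corollary is that a careful combined analysis makes these two losses \emph{merge} into a single factor $\tfrac{e}{e-1}$: one shows that the expected revenue collected at each product $i$ is at least $(1-\tfrac1e)\sum_{j,p} p\,z^\star_{j,i,p}$, by treating the diversion bound for the arrival indicators and the uniform-matroid contention-resolution bound in one computation rather than in sequence. Summing over products and using the LP upper bound yields expected revenue $\ge(1-\tfrac1e)R^\star$, which is derandomized by the method of conditional expectations.

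The step I expect to be the main obstacle is precisely the ``do not compound'' claim: the set of customers arriving at product $i$ is not a sum of independent indicators (the arrivals of $j$ and of $j'$ both depend on which costlier products get offered), and the geometric diversion event and the supply cap act on overlapping randomness, so naively chaining a $\tfrac{e}{e-1}$ for (a) with a $\tfrac{e}{e-1}$ for (b) would only give $\bigl(\tfrac{e}{e-1}\bigr)^2$. This is the place where Theorem~\ref{t-line} uses LP duality to untangle the dependencies introduced by the non-single-minded valuations; on the disjoint-interval structure here the dependencies are one-directional --- a customer only ever deserts product $i$ for a strictly costlier product --- and the plan is to exploit this to carry out the combined per-product accounting cleanly, which is what collapses the $O(\log m)$ of Theorem~\ref{t-line} to the single correlation-gap constant $\tfrac{e}{e-1}$.
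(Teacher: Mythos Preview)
Your encoding of Max-Buy as the disjoint-edge special case of the non-SM highway problem matches the paper's. But your LP and rounding differ from the paper's in a way that creates precisely the obstacle you flag, whereas the paper's route sidesteps that obstacle rather than resolving it.

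The paper uses the configuration LP \eqref{primal_unlimited} of Section~\ref{s-line}, not your assignment LP. There each variable $x_{i,R}$ is indexed by a full solution $R\in\mathcal{R}_i$ for product $i$: a price \emph{together with} a set of at most $c_i$ customers allotted $i$ at that price. For single-item cliques, two losses incurred in Lemma~\ref{lem:clique-pricing} disappear: no price discretization is needed (only the distinct values $v_{ij}$ are relevant), and the dual separation oracle --- the voucher-pricing problem on a \emph{single} item --- is trivially solvable exactly, so \eqref{primal_unlimited} is solved optimally rather than $2$-approximately.

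The rounding (Lemma~\ref{lem:random_rounding}) then samples, independently for each product $i$, a configuration $Q_i$ with probability $x_{i,Q_i}$, and assigns each customer $j$ to the product maximizing $p_j(Q_i)$. Because each $Q_i$ already carries a capacity-feasible allocation, the customers ultimately assigned to $i$ are a subset of those allocated in $Q_i$, so the supply constraint holds automatically: there is no over-subscription step and hence no uniform-matroid contention-resolution loss. The only loss is the per-customer ``first-offered'' loss, which Lemma~\ref{lem:random_rounding} bounds by $(1-1/e)$. This yields $\tfrac{e}{e-1}$ directly.

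Your rounding --- sampling only prices, then letting customers arrive under Max-Buy and truncating to $c_i$ --- does face the compounding of the diversion loss and the contention-resolution loss that you correctly identify, and you leave this step unresolved. Whether that compounding can actually be collapsed to a single $(1-1/e)$ factor against your (weaker) assignment LP is not obvious; the paper never needs to argue it, because bundling the allocation into the sampled configuration makes the second loss vanish.
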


\medskip
\noindent {\bf Related work.\ }
There has been a great deal of recent work on approximation algorithms for various kinds
of pricing problems; see, e.g.,~\cite{AFMZ04,BBM08,CS08,CHK09,GHKKKM05,BK07, GR11, E12}, 
and the references therein. 
However, to our knowledge, the only approximation results for
profit-maximization for non-single-minded CAs with (general) limited supply (i.e.,
not necessarily unit-  or unlimited- supply) are: (1) those gleaned from the reduction in~\cite{AH06} coupled with the guarantee
in~\cite{BK06}; and 
(2) the 2-approximation algorithm of~\cite{BK07} for {\em unit-demand} valuations (where
each customer wants at most one item). (For this very structured subclass of submodular
valuations, this 2-approximation result is better than the guarantee we obtain using Theorem~\ref{results}; hopwever, our method .) 
 We
briefly survey the work in three special cases that have been studied: unit supply,
unlimited supply, and {single-minded (SM) valuations.} 

As remarked earlier, with unit capacities, the profit-maximization problem reduces to 
the SWM problem, which is a relatively well-studied problem.
The approximation guarantees known for the SWM problem for CAs with unit capacities 
are (i) 2 for subadditive valuations~\cite{F06}; 
(ii) $\frac{e}{e-1}$~\cite{DS06,V08} for submodular
valuations; and (iii) $\Tht(\sqrt m)$ for arbitrary valuations~\cite{P88,KS04}.   
Recently, Balcan et al.~\cite{BBM08} and Chakraborty et al.~\cite{CHK09} considered
the unit-capacity problem with subadditive valuations in the {\em online} setting 
where customers arrive online and select their {\em utility-maximizing}
set from the unallotted items given the current prices. 
The guarantees they obtain in this constrained
setting are naturally worse than the guarantees known in the offline setting. 
The unlimited-supply setting with arbitrary 
valuations has been less studied; \cite{BBM08} 
gave an $O(\log m+\log n)$-approximation algorithm by extending an 
algorithm in~\cite{GHKKKM05} for SM valuations.

The single-minded profit-maximization problem 
has received much attention. The work that is most relevant to ours is Cheung and
Swamy~\cite{CS08}, who obtain an approximation guarantee of the same flavor as in part
(ii) of Theorem~\ref{results}. 
They obtain an envy-free solution of profit $\OPT_\swm/O(\al\log {c_{\max}})$ 
 using an LP-based $\al$-approximation for the SWM problem 
(in the SM setting, this is equivalent to the integrality-gap
requirement we have); 
we make use of portions of their analysis in proving our results.
For the unlimited-supply SM problem,~\cite{GHKKKM05} gave an 
$O(\log m+\log n)$-approximation guarantee, which was improved by Briest and
Krysta~\cite{BK06}. 
A variety of approximation results based on dynamic programming have been
obtained~\cite{GHKKKM05,HK05,BB07,BK06,GLSU06} that yield exact algorithms or
approximation schemes for various restricted instances, or pseudopolynomial or
quasipolynomial time algorithms.   
On the hardness side, a reduction from the set-packing problem shows that achieving an
approximation factor better than $m^{\frac{1}{2}}$, or $n$, is \nphard even when
$c_{\max}=1$, 
even for the (SM) tollbooth problem on grid graphs~\cite{GLSU06}, 
and~\cite{DHFS06,BK06,GHKKKM05,CLN13} prove various hardness results for unlimited-supply
instances. 



Finally, we note that the singe-minded version of the highway problem admits a PTAS in the unlimited supply case \cite{GR11} and a quasi-PTAS for the limited supply case \cite{E12} (with an $\epsilon$-approximate notion of envy-freeness), while the non-single minded
version is APX-hard (since it includes the multi-product pricing problem which was proved
to be APX-hard in the Max-Buying setting in \cite{AFMZ04}).





\section{Problem definition and preliminaries} \label{s-prem}

\paragraph{Profit-maximization problems for combinatorial auctions.}
The general setup of {\em profit-maximization} problems for (multi unit) 
{\em combinatorial auctions} (CAs) is as follows. 
There are $n$ customers and $m$ items.  Let $[n]:=\{1,\ldots,n\}$ and
$[m]:=\{1,\ldots,m\}$. 
Each item $e$ is available in some limited supply or capacity $c_e$. Each customer $j$ has a 
{\em valuation function} $v_j:2^{[m]}\mapsto\R_+$, 
where $v_j(S)$ specifies the maximum amount that customer $j$ is willing to pay for the
set $S$; 
equivalently 
this is $j$'s value for receiving the set $S$ of
items. We assume that $v_j(\es)=0$; we often assume for convenience that 
$v_j(S)\leq v_j(T)$ for $S\sse T$, but this monotonicity requirement is not crucial for
our results.  
The objective is to find non-negative prices $p_e \geq 0$ for the items, and an
allocation $(S_1,\ldots,S_n)$ of items to customers (where $S_j$ could be empty) so as to
maximize the total profit   
$\sum_{j\in [n]} \sum_{e \in S_j} p_e = \sum_{e\in[m]}p_e|\{j: e\in S_j\}|$   
while satisfying the following two constraints.
\begin{list}{$\bullet$}{\usecounter{enumi} \itemsep=-0.3ex \leftmargin=2ex}
\item \textbf{Budget constraints.} Each customer $j$ can afford to buy her assigned set:  
$p\bigl(S_j\bigr):=\sum_{e\in S_j}p_e \leq v_j\bigl(S_j\bigr)$. 

\item \textbf{Capacity constraints.} Each element $e$ is assigned to at most $c_e$
customers:
$|\{j \in [n]: e \in S_j \}| \leq c_e$. 
\end{list}
Since the valuations may be arbitrary set functions, an explicit description of the input
may require exponential (in $m$) space. Hence, we assume that the valuations are
specified via an oracle. As is standard in the literature on combinatorial 
auctions and profit-maximization problems (see, e.g.,~\cite{LS05,F06,BBM08,CHK09}), we assume
that a valuation $v$ is specified by a {\em demand oracle}, 
which means that given item prices $\{p_e\}$, the demand-oracle returns a set $S$ that
maximizes the {\em utility} $v(S)-p(S)$. 
We use $c_{\max}:=\max_e c_e$ to denote the maximum item supply. 

\vspace{-5pt}
\paragraph{An LP relaxation.}
We consider a natural linear programming (LP) relaxation \eqref{ca-p} of the SWM problem
for combinatorial auctions, and its dual \eqref{ca-d}.   
Throughout, we use $j$ to index customers, $e$ to index items, and $S$ to index sets of
items. We use the terms supply and capacity, and customer and player interchangeably. 

\vspace{-10pt}
{\centering \hspace*{-20pt}
\begin{minipage}[t]{.49\textwidth}
\begin{alignat}{3}
\max & \quad & \sum_{j,S} v_j(S) & x_{j,S} \tag{P} \label{ca-p} \\
\text{s.t.} && \sum_{S} x_{j,S} & \leq 1 \quad\ && \forall j \label{asgn} \\
&& \sum_j\sum_{S:e\in S} x_{j,S} & \leq c_e \quad\ && \forall e \label{supply} \\[-4pt]
&& x_{j,S} & \geq 0 && \forall j,S \notag 
\end{alignat}
\end{minipage}
\quad \rule[-24ex]{1pt}{22ex}
\begin{minipage}[t]{0.47\textwidth}
\begin{alignat*}{3}
\min & \quad & \sum_e c_ey_e & +\sum_j z_j \tag{D} \label{ca-d} \\
\text{s.t.} && \sum_{e\in S}y_e+z_j & \geq v_j(S) \quad && \forall j,S \\
&& y_e,z_j & \geq 0 \quad && \forall e,j.
\end{alignat*}
\end{minipage}}

\medskip

\noindent
In the primal LP, we have a variable $x_{j,S}$ for each customer $j$ and set 
$S$ that indicates if $j$ receives set $S$, 
and we relax the integrality constraints on these variables to obtain the LP relaxation.
The dual \eqref{ca-d} has variables $z_j$ and $y_e$ for each customer $j$ and element $e$
respectively, which correspond to the primal constraints \eqref{asgn} and \eqref{supply}
respectively. Although \eqref{ca-d} has an exponential number of constraints, it can be
solved efficiently given demand oracles for the valuations as these oracles yield the
desired separation oracle for \eqref{ca-d}. This in turn implies that \eqref{ca-p} can 
be solved efficiently.
We say that an algorithm $\A$ for the SWM problem is an {\em LP-based $\al$-approximation
algorithm} for a class $\V$ of valuations if for every instance involving valuation
functions $(v_1,\ldots,v_n)$, where each $v_j\in\V$, $\A$ returns an integer solution of
value at least $\text{LP-optimum}/\al$. For example, the algorithm in~\cite{F06} is an LP-based
2-approximation algorithm for the class of subadditive valuations.  


%

\begin{definition} \label{ipverify}
We say that an algorithm $\A$ for the SWM problem {\em ``verifies'' an integrality gap of
(at most) $\al$} for an LP-relaxation of the SWM problem (e.g.,\eqref{ca-p}), if for  
{\em every} profile of (monotonic) valuation functions $(v_1,\ldots,v_n)$, 
$\A$ returns an integer solution of value at least
$(\text{LP-optimum})/\al$.  
\end{definition}

As emphasized above, an integrality-gap-verifying algorithm above must ``work'' for 
every valuation-profile. 
(Note that for the SWM problem, one can always assume that the valuation is
monotonic, since we can always move from a set to its subset (as items may be left
unallotted).) 
In particular, an LP-based $\al$-approximation algorithm for a given 
{\em structured class} of valuations (e.g., submodular or subadditive valuations) 
{\em does not} verify the integrality gap for the LP-relaxation. 
This is the precise reason why our guarantee for subadditive valuations (part (i) of
Theorem~\ref{results}) does not follow from 
part (ii) of Theorem~\ref{results}.   
%
In certain cases however, one may be able to encapsulate the combinatorial structure of
the SWM problem with a structured valuation class by formulating a stronger LP-relaxation
for the SWM problem, and thereby prove that an approximation algorithm for the structured
valuation class is in fact an integrality-gap-verifying approximation algorithm with
respect to this {\em stronger LP-relaxation}. 
For example, in Section~\ref{apps} we consider the
setting where items are edges of a tree and customers desire paths
of the tree. This leads to the structured valuation where, for a set of edges $T$,
$v(T)=\max\{v(P): P\text{ is a path in }T\}$ (with $v(P)\geq 0$ being the value
for path $P$). 
We design an $O(1)$-approximation algorithm for such valuations, and formulate a
stronger LP for the corresponding SWM problem 
for which our algorithm verifies a constant integrality gap.  
(For the SWM problem with subadditive valuations, it is not known how to exploit the
underlying structure 
and formulate an efficiently-solvable LP-relaxation
with $O(1)$ integrality gap.)  


For a given instance $\I=\bigl(m,n,\{v_j\}_{j\in [n]},\{c(e)\}_{e\in [m]}\bigr)$, our
algorithms will consider different capacity vectors $k\leq c$. We use \plp{k} and \dlp{k}
to denote respectively \eqref{ca-p} and \eqref{ca-d} with capacity-vector 
$k=(k_e)$, and $\OPT(k)$ to denote their common optimal value. 
Let $\OPT:=\OPT(c)$ denote the optimum value of \eqref{ca-p} (and \eqref{ca-d})
with the original capacities. 
We will utilize the following facts that follow from complementary slackness, and
a rounding result that follows from the work of Carr and Vempala~\cite{CV02}, and was made
explicit in~\cite{LS05}.  

\begin{myclaim} \label{cscond}
Let $k=(k_e)$ be any capacity-vector, and let $x^*$ and $(y^*,z^*)$ be optimal solutions
to \plp{k} and \dlp{k} respectively. 
\vspace{-1.5ex}
\begin{list}{(\roman{enumi})}{\usecounter{enumi} \itemsep=-0.2ex \leftmargin=2ex}
\item If $x^*_{j,S}>0$, then $\sum_{e\in S}y^*_e\leq v_j(S)$; 
\item If $x^*_{j,S}>0$, and $v_j$ is subadditive, then $\sum_{e\in T}y^*_e\leq v_j(T)$ for
any $T\sse S$;   
\item If $y^*_e>0$, then $\sum_{j,S:e\in S}x^*_{j,S}=k_e$.
\end{list}
\end{myclaim}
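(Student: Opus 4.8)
The three statements are exactly the complementary-slackness conditions for the primal--dual pair \plp{k}, \dlp{k}, with part (ii) requiring one extra ingredient (subadditivity). So the plan is: first write down the two families of complementary-slackness relations for this pair, then read off (i) and (iii) directly, and finally derive (ii) by combining the tight dual constraint from (i) with subadditivity and dual feasibility applied to a complementary set.

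\textbf{Parts (i) and (iii).} Since $x^*$ and $(y^*,z^*)$ are optimal for \plp{k} and \dlp{k}, complementary slackness applies. For (i): $x^*_{j,S}>0$ forces the dual constraint indexed by $(j,S)$ to be tight, i.e.\ $\sum_{e\in S}y^*_e+z^*_j=v_j(S)$; dropping $z^*_j\ge 0$ gives $\sum_{e\in S}y^*_e\le v_j(S)$. (I would keep the equality, not just this inequality, since it is what part (ii) needs.) For (iii): symmetrically, $y^*_e>0$ forces the primal constraint \eqref{supply} with capacity $k_e$ to be tight, i.e.\ $\sum_{j,S:e\in S}x^*_{j,S}=k_e$.

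\textbf{Part (ii).} Fix $T\sse S$ with $x^*_{j,S}>0$ and $v_j$ subadditive. Dual feasibility of $(y^*,z^*)$ for the pair $(j,S\sm T)$ gives $\sum_{e\in S\sm T}y^*_e+z^*_j\ge v_j(S\sm T)$, and subadditivity gives $v_j(S)\le v_j(T)+v_j(S\sm T)$. Combining these with the tight equality from (i),
\[
\sum_{e\in T}y^*_e=\Bigl(\sum_{e\in S}y^*_e+z^*_j\Bigr)-\Bigl(\sum_{e\in S\sm T}y^*_e+z^*_j\Bigr)=v_j(S)-\Bigl(\sum_{e\in S\sm T}y^*_e+z^*_j\Bigr)\le v_j(S)-v_j(S\sm T)\le v_j(T),
\]
which is the claim.

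\textbf{Main obstacle.} There is essentially no obstacle: the proof is a routine invocation of LP duality. The only points to watch are (a) extracting the \emph{equality} form of the slackness condition in (i) rather than the stated inequality, since (ii) genuinely uses $z^*_j$ on both sides and it must cancel; and (b) applying subadditivity in the correct direction, $v_j(T)+v_j(S\sm T)\ge v_j(S)$, together with dual feasibility for $S\sm T$, which is what makes the ``complementary'' term $\sum_{e\in S\sm T}y^*_e+z^*_j$ drop out favorably.
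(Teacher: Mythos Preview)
Your proof is correct and matches the paper's argument essentially line for line: parts (i) and (iii) are read off from complementary slackness, and part (ii) subtracts the dual-feasibility inequality for $(j,S\sm T)$ from the tight equality $\sum_{e\in S}y^*_e+z^*_j=v_j(S)$ and then applies subadditivity. Your remark about needing the equality form of the slackness condition (not just the inequality stated in (i)) is exactly the point the paper's proof uses as well.
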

\begin{proof}
Parts (i) and (iii) follow directly from the complementary slackness (CS) conditions: part 
(i) follows from the CS condition for $x^*_{j,S}$, since $z^*_i\geq 0$;
part (iii) uses the CS condition for $y^*_e$ and the corresponding primal constraint
\eqref{supply}.
For part (ii), again, by the CS conditions we have $\sum_{e\in S}y^*_e+z^*_j=v_j(S)$. 
Also, dual feasibility implies that $\sum_{e\in S\sm T}y^*_e+z^*_j\geq v_j(S\sm T)$. 
Subtracting this from the first equation and using subadditivity yields 
$\sum_{e\in T}y^*_e\leq v_j(S)-v_j(S\sm T)\leq v_j(T)$.
\end{proof}

\begin{remark} \label{rem-cscond}
As mentioned above, we will sometimes consider a different LP-relaxation 
when considering the SWM problem with a structured class of valuations. 
Roughly speaking, the only properties we require of this LP 
are that it should:
(a) include a constraint similar to \eqref{supply} that encodes the supply constraints; and 
(b) be a {\em packing LP}, i.e., have the form $Ax\leq b,\ x\geq 0$ where $A$ is a
nonnegative matrix. 
Given this, parts (i) and (iii) of Claim~\ref{cscond} continue to hold with $y_e$ denoting
(as before) the dual variable corresponding to the supply constraint for item $e$, since  
the dual is then a covering LP. 
\end{remark}

\begin{lemma}[\cite{CV02,LS05}] \label{decomp}
Given a fractional solution $x$ to the LP-relaxation of an SWM problem that is a
packing LP (e.g., $\plp{k}$), and a polytime integrality-gap-verifying $\al$-approximation 
algorithm $\A$ for this LP, one can express $\frac{x}{\al}$ as a convex combination of
integer solutions to the LP in polytime. In particular, one can round $x$ to a random integer 
solution $\hx$ satisfying the following ``rounding property'':
$\frac{x_{j,S}}{\al}\leq\Pr[\hx_{j,S}=1]\leq x_{j,S}\ \ \forall j,S$. 
\end{lemma}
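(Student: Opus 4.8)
The plan is to prove the Carr--Vempala-style statement in two parts: a structural claim that $\frac{x}{\al}$ lies in the polytope $P_I=\operatorname{conv}\{\chi:\chi\text{ integer feasible for the LP}\}$ (which, since the LP is a packing LP whose feasible region is down-monotone — a customer may always be left unallotted — is itself a down-monotone polytope), and then an ellipsoid-based implementation that turns this membership into an explicit convex decomposition with polynomially many terms. Sampling an integer solution according to the weights of this decomposition then yields $\hx$, and the rounding property falls out of the coordinates of the decomposition.

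For the structural claim I would argue by contradiction. If $\frac{x}{\al}\notin P_I$, a separating hyperplane gives coefficients $w=(w_{j,S})$ on the LP variables with $w\cdot\tfrac{x}{\al}>\max\{w\cdot\chi:\chi\text{ integer feasible}\}$. Using down-monotonicity of the feasible region, zeroing the negative entries of $w$ leaves $\max_\chi w\cdot\chi$ unchanged and does not decrease $w\cdot\tfrac{x}{\al}$ (as $x\ge 0$), so we may assume $w\ge 0$; likewise, replacing $w$ by its monotone closure $\tilde w_{j,S}:=\max_{T\sse S}w_{j,T}$ leaves $\max_\chi w\cdot\chi$ unchanged and only increases $w\cdot\tfrac{x}{\al}$ (with the routine bookkeeping that we may also take $w_{j,\es}=0$), so $w$ defines a monotone valuation profile $v'_j(S):=w_{j,S}$ with $v'_j(\es)=0$. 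Now $x$ is feasible for \plp{k} with objective $w$, so the LP-optimum for the instance with valuations $v'$ is at least $w\cdot x$; since $\A$ verifies an integrality gap of $\al$ for this LP (Definition~\ref{ipverify}), on this instance it returns an integer feasible solution $\chi^*$ with $w\cdot\chi^*\ge\frac{\text{LP-opt}}{\al}\ge w\cdot\tfrac{x}{\al}$, contradicting the choice of $w$. Hence $\frac{x}{\al}\in P_I$, and by down-monotonicity we may in fact write $\frac{x}{\al}=\sum_i\lambda_i\chi_i$ with each $\chi_i$ integer feasible, $\lambda_i>0$, $\sum_i\lambda_i=1$, and $\frac{x_{j,S}}{\al}\le\sum_{i:(\chi_i)_{j,S}=1}\lambda_i\le x_{j,S}$ for all $j,S$ — starting from any domination $\sum_i\lambda_i\chi_i\ge\frac{x}{\al}$ and ``trimming'' any coordinate that exceeds $x_{j,S}$ by splitting the offending $\lambda_i$'s and zeroing that coordinate in a copy of $\chi_i$ (legal by down-monotonicity), which stays $\ge\frac{x_{j,S}}{\al}$ since $\al\ge1$.

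To make this constructive, run the ellipsoid method on the (exponentially-many-variables) LP asking for such a $\lambda$: whenever the ellipsoid routine proposes a tentative dual certificate $w$, we form the monotone profile $\tilde w$ and invoke $\A$; by the argument above $\A$ always returns an integer solution certifying that $w$ is not a valid separator, so the ellipsoid method halts in polynomial time having generated only polynomially many integer solutions $\chi_i$, and the resulting restricted LP is solved exactly to recover $\lambda$. Finally, output $\hx=\chi_i$ with probability $\lambda_i$; then $\Pr[\hx_{j,S}=1]=\sum_{i:(\chi_i)_{j,S}=1}\lambda_i\in[\frac{x_{j,S}}{\al},x_{j,S}]$, as claimed.

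The main obstacle is the structural claim, and the crux within it is reinterpreting the separating hyperplane $w$ as a valuation profile so that the integrality-gap-verifying guarantee of $\A$ can be applied: this is precisely where one needs $\A$ to succeed on \emph{every} valuation profile (Definition~\ref{ipverify}), not merely on a fixed structured class. The monotone-closure and empty-set normalizations, and the ellipsoid implementation, are standard.
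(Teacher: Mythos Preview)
The paper does not prove this lemma: it is stated with attribution to~\cite{CV02,LS05} and invoked as a black box (``a rounding result that follows from the work of Carr and Vempala~\cite{CV02}, and was made explicit in~\cite{LS05}''). Your proposal is precisely the Carr--Vempala separating-hyperplane argument together with the Lavi--Swamy observation that the separating functional can be massaged into a monotone valuation profile so that the integrality-gap-verifying algorithm $\A$ applies; this is exactly the content of the cited references, and it is correct.

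One small remark on presentation: once you have established $\frac{x}{\al}\in P_I$ you immediately get an \emph{exact} convex decomposition $\frac{x}{\al}=\sum_i\lambda_i\chi_i$, which already gives $\Pr[\hx_{j,S}=1]=\frac{x_{j,S}}{\al}$ and hence both inequalities in the rounding property (the upper bound because $\al\ge 1$). The subsequent discussion of ``domination $\ge\frac{x}{\al}$'' followed by ``trimming'' is therefore redundant for the existential claim; it is only relevant if, in the constructive ellipsoid step, one works with the domination LP (whose dual naturally has $w\ge 0$) rather than the equality LP. Either route is fine, but mixing the two in the same sentence is a little confusing.
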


\section{The main algorithm and its applications} \label{alg}
Claim~\ref{cscond} leads to the simple, but important observation that if $k\leq c$ and
the optimal primal solution $x^*$ is integral, then by using $\{y^*_e\}$ as the prices,
one obtains a feasible solution to the profit-maximization problem with profit 
$\sum_e k_ey^*_e$. 
There are two main obstacles encountered in leveraging this observation and turning it 
into an approximation algorithm. 
First, \plp{k} will of course not in general have an integral optimal solution.
Second, it is not clear what capacity-vector $k\leq c$ to use: for instance, 
$\sum_e c_ey^*_e$ could be much smaller than $\OPT$ (it is easy to construct such
examples), 
and in general, $\sum_e k_ey^*_e$ could be quite small for a given capacity-vector 
$k\leq c$.    
We overcome these difficulties by taking an approach similar to the one 
in~\cite{CS08}. 

We tackle the second difficulty by utilizing a key lemma proved by Cheung and
Swamy~\cite{CS08}, which is stated in a slightly more general form in Lemma~\ref{cheungs}
so that it can be readily applied to various profit-maximization problems. 
This lemma implies that 
one can efficiently compute a capacity-vector $k\leq c$ and an optimal dual
solution $(y^*,z^*)$ to $\dlp{k}$ such that $\sum_e k_ey^*_e$ is 
$\bigl(\OPT-\OPT(\bone)\bigr)/O(\log c_{\max})$, where
$\bone$ denotes the all-one's vector (Corollary~\ref{goodcap}). 
%
To handle the first difficulty, notice that part (i) of Claim~\ref{cscond} implies that
one can still use $\{y^*_e\}$ as the prices, provided we obtain an allocation (i.e.,
integer solution) $\hx$ that only assigns a set $S$ to customer $j$ (i.e., $\hx_{j,S}=1$)
if $x^*_{j,S}>0$. 
(In contrast, in the envy-free setting, 
if we use $\{y^*_e\}$ as the prices then {\em every} customer $j$ with $z^*_j>0$, and hence
$\sum_S x^*_{j,S}=1$, must be assigned a set $S$ with $x^*_{j,S}>0$; this may be
impossible with non-single-minded valuations, whereas this is easy to accomplish with
single-minded valuations (as there is only one set per customer).) 
Furthermore, for subadditive valuations, part (ii) of Claim~\ref{cscond}
shows that it suffices to obtain an allocation where $\hx_{j,T}=1$ implies that there is
some set $S\supseteq T$ with $x^*_{j,S}>0$. This is precisely what our algorithms do.
We show that one can round $x^*$ into an integer solution $\hx$ satisfying the above 
{\em structural} properties, and in addition ensure that the profit obtained, 
$\sum_{j,T}\hx_{j,T}\bigl(\sum_{e\in T}y^*_e\bigr)$, is ``close'' to  
$\sum_e k_ey^*_e$ (Lemma~\ref{round}). 
%
So if $\sum_e k_e y^*_e$ is $\OPT/O(\log c_{\max})$ then applying this rounding procedure 
to the optimal primal solution to $\plp{k}$ yields a ``good'' solution.
On the other hand, Corollary~\ref{goodcap} implies that if this is not the case, then
$\OPT(\bone)$ must be large compared to $\OPT$, and then we observe that an
$\al$-approximation to the SWM problem trivially yields a solution with profit
$\OPT(\bone)/\al$ (Lemma~\ref{unitcap}). 
(As mentioned earlier, in the envy-free setting and unit capacities, there can be an
$\Omega(m)$-gap between the optimum profit 
and the optimum social welfare.) 
Thus, in either case we obtain the desired approximation. 

The algorithm is described precisely in Algorithm~\ref{gen-alg}.
If we use an LP-relaxation different from \eqref{ca-p} for the SWM
problem with a given valuation class that satisfies the properties stated in
Remark~\ref{rem-cscond}, then the only (obvious) change to Algorithm~\ref{gen-alg} is that 
we now use this LP and its dual (with the appropriate capacity-vector) instead of
$\eqref{ca-p}$ and $\eqref{ca-d}$ above. 

\begin{algorithm}[ht!] 
\caption{Non-single-minded profit-maximization} \label{gen-alg}
\algsetup{linenodelimiter=.}
\begin{algorithmic}[1]
\REQUIRE a profit-maximization instance $\I=\bigl(m,n,\{v_j\},\{c_e\}\bigr)$ with a demand
oracle for each valuation $v_j$
\STATE Define $k^1,k^2,\ldots,k^\ell$ as the following capacity-vectors. 
Let $k^1_e=1\ \forall e$. 
For $j>1$, 
let \mbox{$k^j_e=\min\bigl\{\lceil(1+\e)k^{j-1}_e\rceil,c_e\bigr\}$}; let $\ell$ be
the smallest index such that $k^\ell=c$. 
\label{multcap}

\STATE For each vector $k=k^j,\ j=1,\ldots,\ell$, compute an optimal solution 
$(y^\br k,z^\br k)$ to $\dlp{k}$. 
Select $u\in\{k^1,\ldots,k^\ell\}$ that maximizes $\sum_e u_ey^\br u_e$.
\label{bestcap}

\STATE 
Compute an optimal solution $x^\br u$ to $\plp{u}$. 
Use $\round(u,x^\br u)$ to convert $x^\br u$ to a feasible allocation. \label{rounding}  

\STATE Use an LP-based $\al$-approximation algorithm for the SWM problem (with the given
valuation class) to compute an $\al$-approximate solution to the SWM problem with unit 
capacities, and a pricing scheme for this allocation that yields profit equal to the
social-welfare value of the allocation. \label{unitc}

\STATE Return the better of the following two solutions: 
(1) allocation computed in step \ref{rounding} with $\{y^\br u_e\}$ as the prices; 
(2) allocation and pricing scheme computed in step~\ref{unitc}. 
\end{algorithmic}

\renewcommand{\algorithmicloop}{{\boldmath $\round(\mu=(\mu_e),x^*)$}}
\smallskip
\begin{algorithmic}
\LOOP
\STATE {\bf Subadditive valuations:\ }First, independently for each player $j$, assign $j$ 
at most one set $S$ by choosing set $S$ with probability $x^*_{j,S}$. 
If an item $e$ gets allotted to more than $\mu_e$ customers this way, then arbitrarily
select $\mu_e$ customers from among these customers and assign $e$ to these customers. 
Given item prices, this algorithm can be derandomized via the method of conditional
expectations. 

\STATE {\bf General valuation class:\ } Given an integrality-gap-verifying
$\al$-approximation algorithm (for $\plp{\mu}$), use Lemma~\ref{decomp} to decompose
$\frac{x^*}{\al}$ into a convex combination $\sum_{r=1}^\ell \ld_r\hx^r$ of integer
solutions to $\plp{\mu}$. (Here $\sum_r\ld_r=1$ and $\ld_r\geq 0$ for 
each $r$.) Return $\hx^\br r$ with probability $\ld_r$. 
Given item prices, this algorithm can be derandomized by choosing
the \mbox{solution in $\{\hx^\br 1,\ldots,\hx^\br r\}$ achieving maximum profit.}
\ENDLOOP
\end{algorithmic}
\renewcommand{\algorithmicloop}{\textbf{loop}}
\end{algorithm}

\vspace{-5pt}
\paragraph{Analysis.} 
The analysis of Algorithm~\ref{gen-alg} for both subadditive valuations and a general
valuation class proceeds very similarly with the only point of difference being in the 
analysis of the rounding procedure (Lemma~\ref{round}). 
First, observe that if we have an allocation $(S_1,\ldots,S_n)$ that is feasible with unit 
capacities, then since the sets $S_j$ are disjoint we can charge each customer her
valuation for the assigned set by pricing one of her items at this value, and hence,
obtain profit equal to the social-welfare value $\sum_j v_j(S_j)$ of the allocation. 

\begin{lemma} \label{unitcap}
Given an LP-based $\alpha$-approximation algorithm for the SWM problem with a given
valuation class, one can compute a solution that achieves profit at least
$\OPT(\bone)/\alpha$.      
\end{lemma}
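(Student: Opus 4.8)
The plan is to reduce directly to the definition of an LP-based $\al$-approximation algorithm, combined with the pricing observation stated just before the lemma. First, I would run the given LP-based $\al$-approximation algorithm $\A$ for the SWM problem on the instance obtained from $\I$ by replacing every capacity $c_e$ with $1$; the LP-relaxation of this unit-capacity SWM instance is precisely $\plp{\bone}$, whose optimum is $\OPT(\bone)$. By the definition of an LP-based $\al$-approximation algorithm, $\A$ returns an integer allocation $(S_1,\dots,S_n)$ that is feasible for the unit-capacity constraints and satisfies $\sum_j v_j(S_j)\ge\OPT(\bone)/\al$.

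Next, I would convert this allocation into a feasible solution to the profit-maximization problem for the original instance $\I$, exactly as in the observation preceding the lemma. Unit capacities force the sets $S_1,\dots,S_n$ to be pairwise disjoint, so for each customer $j$ with $S_j\neq\es$ we pick an arbitrary item $e_j\in S_j$, set $p_{e_j}=v_j(S_j)$, and set $p_e=0$ for every other item $e$. Every budget constraint then holds, since $p(S_j)=v_j(S_j)$ for each nonempty $S_j$ and $p(\es)=0$; and because each item is allotted to at most one customer, the original capacity constraints $c_e\ge 1$ are satisfied as well. The total profit collected equals $\sum_{j:\,S_j\neq\es}v_j(S_j)=\sum_j v_j(S_j)\ge\OPT(\bone)/\al$, which is the claimed bound.

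There is essentially no hard step here; the two points that need care are (a) that $\A$ must be run on the \emph{unit-capacity} instance, so that its guarantee is measured against $\OPT(\bone)$ rather than $\OPT$, and (b) that unit capacities make the $S_j$'s pairwise disjoint, which is precisely what lets item-pricing extract the entire social-welfare value $\sum_j v_j(S_j)$ as profit. This last point fails as soon as capacities exceed $1$ (cf.\ the $H_{c_{\max}}$ gap noted in the introduction), which is why the main algorithm has to do more work in the general case. Note also that the argument uses only $v_j(\es)=0$ and no monotonicity or structural assumption on the valuations.
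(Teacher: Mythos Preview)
Your proposal is correct and is precisely the argument the paper intends: the observation immediately preceding the lemma already notes that a unit-capacity allocation has disjoint sets, so pricing one item of each $S_j$ at $v_j(S_j)$ extracts the full social welfare as profit, and combining this with the $\al$-approximation guarantee on $\plp{\bone}$ yields the bound. The paper does not spell out a separate proof, so your write-up simply makes explicit what the paper leaves implicit.
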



\newcommand{\dpr}[1]{\ensuremath{(\mathrm{C}_{#1})}}

\begin{lemma}[\cite{CS08} paraphrased] \label{cheungs}
Let $\dpr{k}$ denote the LP: 
$\min\ k^Ty+b^Tz\ \ \text{s.t.}\ \ (y,z)\in\Pc\sse\R_+^{m+n}$,
where $k,y\in\R_+^m,\ b,z\in\R_+^n,\ \Pc\neq\es$. 
Let $(y^\br k,z^\br k)$ be an optimal solution to $\dpr{k}$ that maximizes $k^Ty$ among 
all optimal solutions, and $\opt(k)$ denote the optimal value.
Let $k^1,\ldots,k^\ell$, and $u$ be as defined in steps~\ref{multcap} and~\ref{bestcap}
respectively of Algorithm~\ref{gen-alg}. Then, 
$\sum_e u_ey^\br u_e\geq\bigl(\opt(c)-\opt(\bone)\bigr)/\bigl(2(1+\e)H_{c_{\max}}\bigr)$.
\end{lemma}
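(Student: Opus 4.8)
The plan is to bound, for each index $j = 1, \ldots, \ell$, the quantity $k^j \cdot y^{(k^j)}$ from below using the optimal dual values at the \emph{next} capacity vector, and then telescope. The key structural fact I would establish first is a monotonicity/comparison between consecutive optima: since $k^{j} \le k^{j+1}$ componentwise and $\dpr{k}$ is a minimization over the \emph{fixed} polytope $\Pc$ with objective $k^Ty + b^Tz$, enlarging $k$ can only increase the optimum, so $\opt(k^1) \le \opt(k^2) \le \cdots \le \opt(k^\ell) = \opt(c)$. More importantly, I would show $\opt(k^{j+1}) \le \opt(k^j) + (k^{j+1} - k^j)^T y^{(k^j)}$: indeed, $(y^{(k^j)}, z^{(k^j)})$ is feasible for $\dpr{k^{j+1}}$ (same polytope $\Pc$), and its objective value there is exactly $k^j{}^T y^{(k^j)} + b^T z^{(k^j)} + (k^{j+1}-k^j)^T y^{(k^j)} = \opt(k^j) + (k^{j+1}-k^j)^T y^{(k^j)}$. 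By the choice of $u$ as the maximizer of $k^T y^{(k)}$ over the $k^j$'s, we have $(k^{j+1}-k^j)^T y^{(k^j)} \le k^j{}^T y^{(k^j)} \le u^T y^{(u)}$ \emph{provided} $k^{j+1} - k^j \le k^j$ componentwise — which is exactly what the update rule $k^{j+1}_e = \min\{\lceil (1+\e)k^j_e\rceil, c_e\}$ guarantees once $\e \le 1$ (and this is where the factor $(1+\e)$ enters, via $\lceil(1+\e)t\rceil - t \le t$ for integer $t \ge 1$).

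Next I would combine these two ingredients. Telescoping the recursive bound gives
\[
\opt(c) - \opt(\bone) \;=\; \sum_{j=1}^{\ell-1}\bigl(\opt(k^{j+1}) - \opt(k^j)\bigr) \;\le\; \sum_{j=1}^{\ell-1} (k^{j+1}-k^j)^T y^{(k^j)}.
\]
So it remains to bound each term $(k^{j+1}-k^j)^T y^{(k^j)}$ by a multiple of $u^T y^{(u)}$ and control the number of terms $\ell - 1$. For the per-term bound, the comparison above already gives $(k^{j+1}-k^j)^T y^{(k^j)} \le k^j{}^T y^{(k^j)} \le u^T y^{(u)}$. For the count, the geometric growth rule means that for each fixed coordinate $e$, the value $k^j_e$ goes from $1$ to $c_e \le c_{\max}$ in at most $O(\log_{1+\e} c_{\max})$ steps, i.e. $\ell - 1 = O\!\left(\frac{\log c_{\max}}{\log(1+\e)}\right)$; a cleaner accounting that yields the harmonic-number bound $H_{c_{\max}}$ stated in the lemma is to observe that the increase at step $j$ in coordinate $e$ is at most $\e k^{j-1}_e + 1 \le 2\e k^{j-1}_e$ (for $k^{j-1}_e \ge 1/\e$, handled separately) over a fraction $\Theta(\e)$ of the dynamic range, so that the total "charge" sums like $\sum_{t=1}^{c_{\max}} \frac{1}{t} = H_{c_{\max}}$ — this is the Cheung--Swamy counting argument and I would follow it to get exactly the constant $2(1+\e)$.

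The main obstacle I anticipate is the bookkeeping in the last step: getting the clean constant $2(1+\e)H_{c_{\max}}$ rather than a cruder $O(\log c_{\max})$ bound requires carefully matching each increment $(k^{j+1}_e - k^j_e) y^{(k^j)}_e$ against the "right" term of the harmonic series, and handling the boundary cases where $k^j_e$ has already saturated at $c_e$ (so the increment is zero and that coordinate drops out) versus where $(1+\e)k^j_e < 1$ forces an increment of exactly $1$ by the ceiling. The per-coordinate argument and the telescoping are routine; the delicate point is that the maximizing index $u$ must dominate \emph{every} $k^j{}^T y^{(k^j)}$ simultaneously, and that the "one extra step" from the ceiling does not blow up the constant — this is precisely why the statement has $(1+\e)$ and not $\e$ in it, and I would verify the arithmetic $\lceil (1+\e) t \rceil \le (1+\e)t + 1 \le 2t$ for integer $t \ge 1$ and $\e \le 1$ makes the domination $k^{j+1}-k^j \le k^j$ go through cleanly.
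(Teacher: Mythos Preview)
Your overall structure matches the paper's: both telescope via $\opt(k^{j}) - \opt(k^{j-1}) \le (k^{j}-k^{j-1})^{T} y^{(k^{j-1})}$ (using that $(y^{(k^{j-1})}, z^{(k^{j-1})})$ is feasible for $\dpr{k^j}$) and then bound each increment by a multiple of $P := u^T y^{(u)}$.

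The gap is in the per-term bound. Your explicit estimate $(k^{j+1}-k^j)^T y^{(k^j)} \le (k^j)^T y^{(k^j)} \le P$ is correct but only yields $\opt(c)-\opt(\bone) \le (\ell-1)P$, and $\ell-1 = \Theta\bigl(\log_{1+\e} c_{\max}\bigr) = \Theta\bigl(\tfrac{1}{\e}\log c_{\max}\bigr)$, which blows up as $\e\to 0$ and does not give the stated constant $2(1+\e)H_{c_{\max}}$. Your ``cleaner accounting'' paragraph does not repair this: bounding each coordinate's increment by $\e k^{j-1}_e+1$ and summing still incurs the same $1/\e$ factor, because the dual vectors $y^{(k^j)}$ change with $j$ and you cannot track coordinates independently across the telescope.

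The paper's fix is to factor out the \emph{maximum ratio}: write $(k^{j}-k^{j-1})^T y^{(k^{j-1})} \le \rho_j\cdot (k^{j-1})^T y^{(k^{j-1})} \le \rho_j P$ with $\rho_j := \max_e d^j_e/k^{j-1}_e$. The structural claim you are missing is that this maximum is always attained at a \emph{single fixed coordinate} $e^*$ with $c_{e^*}=c_{\max}$ (reason: all unsaturated coordinates share the same value $k^{j-1}_e$, so the one with largest capacity has the largest increment, while saturated coordinates have $d^j_e=0$). Hence $\sum_j \rho_j = \sum_j (k^{j}_{e^*}-k^{j-1}_{e^*})/k^{j-1}_{e^*}$ is a sum along one coordinate; since $k^{j}_{e^*}\le 2(1+\e)k^{j-1}_{e^*}$, each term is at most $2(1+\e)\sum_{t=k^{j-1}_{e^*}+1}^{k^{j}_{e^*}} 1/t$, and the whole sum telescopes to at most $2(1+\e)(H_{c_{\max}}-1)$. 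This step, not the crude $d^j\le k^{j-1}$ comparison, is what eliminates the $1/\e$ dependence and delivers the stated bound.
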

\begin{proof}
We mimic the proof in~\cite{CS08}.
First, note that $\opt(k)$ is well-defined for all $k\geq 0$. 

For $j>1$, define $d^j=k^j-k^{j-1}$. Note that $0\leq d^j_e\leq k^j_e$ for all $e$.
Let $e^*$ be an item with $c_{e^*}=c_{\max}$. 

\begin{myclaim} \label{maxr}
For any $j>1$, we have 
${d^j_{e^*}}/{k^j_{e^*}}=\max_e({d^j_e}/{k^j_e})$ and
${d^j_{e^*}}/{k^{j-1}_{e^*}}=\max_e({d^j_e}/{k^{j-1}_e})$.
\end{myclaim}

\begin{proof} 
It is easy to argue the following by induction on $j$: (i) if $k^{j-1}_e<c_e$ and
$k^{j-1}_{e'}<c_{e'}$, then $k^{j-1}_e=k^{j-1}_{e'}$; and 
(ii) if $c_e\leq c_{e'}$, then $k^j_e\leq k^j_{e'}$ and $d^j_e\leq d^j_{e'}$. 
It is clear that $d^j_e>0$ iff $k^{j-1}_e<c_e$, and that $k^{j-1}_{e*}<c_{e^*}$ for all
$j>1$. Combining these facts, for any $e$ with $d^j_e>0$, we have $k^{j-1}_e<c_e$ and so
$k^{j-1}_e=k^{j-1}_{e^*}$, and since $c_e\leq c_{e^*}$, we have $k^j_e\leq k^j_{e^*}$ and
$d^j_e\leq d^j_{e^*}$.  
Thus, $\frac{d^j_{e^*}}{k^j_{e^*}}\geq\frac{d^j_e}{k^j_e}$ and 
$\frac{d^j_{e^*}}{k^{j-1}_{e^*}}\geq\frac{d^j_e}{k^{j-1}_e}$.
\end{proof}

Let $P=\sum_e u_ey^\br u_e=\max_{k=k^1,\ldots,k^\ell}\sum_e k_ey^\br k_e$.
Then, for every $j>1$, we have
\begin{equation}
P\cdot\frac{d^j_{e^*}}{k^{j-1}_{e^*}}\geq\opt\bigl(k^j\bigr)-\opt(k^{j-1}).
\label{ineq2}
\end{equation}
This follows because, considering $(y,z)=\bigl(y^\br {k^{j-1}},z^\br {k^{j-1}}\bigr)$,
which is a feasible solution for $\dpr{k^j}$ and an optimal solution for
$\dpr{k^{j-1}}$, the RHS is at most 
$\sum_e d^j_ey_e\leq \max_e\frac{d^j_e}{k^{j-1}_e}\cdot\sum_e k^{j-1}_ey_e
\leq \frac{d^j_{e^*}}{k^{j-1}_{e^*}}P$, where the last inequality follows again
from Claim~\ref{maxr} and since $\sum_e k^{j-1}_ey_e\leq P$.

Since $k^j_{e^*}\leq 2(1+\e)k^{j-1}_{e^*}$, we can upper bound the coefficient of $P$ in the above inequality by
$2(1+\eps)\sum_{t=k^{j-1}_{e^*}+1}^{k^j_{e^*}} 1/t$.  
Thus, adding \eqref{ineq2} for all $j>1$ gives 
$P\cdot 2(1+\e)\bigl(H_{c_{\max}}-1\bigr)\geq\opt(c)-\opt(\bone)$. 
\end{proof}

\begin{corollary} \label{goodcap}
The capacity-vector $u$ computed in step~\ref{bestcap} of Algorithm~\ref{gen-alg} satisfies 
the inequality 
$\sum_e u_ey^\br u_e\geq\bigl(\OPT(c)-\OPT(\bone)\bigr)/\bigl(2(1+\e)H_{c_{\max}}\bigr)$.
\end{corollary}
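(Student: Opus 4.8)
The plan is to derive Corollary~\ref{goodcap} as essentially an immediate consequence of Lemma~\ref{cheungs}, by instantiating the abstract LP $\dpr{k}$ with the concrete dual LP $\dlp{k}$ of the SWM problem. First I would observe that $\dlp{k}$ has exactly the form required by Lemma~\ref{cheungs}: its objective is $\sum_e k_e y_e + \sum_j z_j$, which matches $k^Ty + b^Tz$ with $b=\bone$ (the all-ones vector in $\R_+^n$), and its feasible region $\Pc = \{(y,z)\in\R_+^{m+n} : \sum_{e\in S}y_e + z_j \geq v_j(S)\ \forall j,S\}$ is independent of the capacity-vector $k$ and is nonempty (e.g.\ $y_e = \max_j v_j([m])$ for all $e$, $z_j=0$ is feasible). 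Hence $\opt(k)$ in the notation of Lemma~\ref{cheungs} coincides with $\OPT(k)$, the common optimal value of $\plp{k}$ and $\dlp{k}$, as defined earlier in the excerpt.

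Next I would check that the choice of $u$ in step~\ref{bestcap} of Algorithm~\ref{gen-alg} matches the hypothesis of Lemma~\ref{cheungs}. Step~\ref{bestcap} computes, for each $k \in \{k^1,\ldots,k^\ell\}$, an optimal solution $(y^\br k, z^\br k)$ to $\dlp{k}$, and selects $u$ maximizing $\sum_e u_e y^\br u_e$. One subtlety: Lemma~\ref{cheungs} requires $(y^\br k, z^\br k)$ to be the optimal solution that additionally maximizes $k^Ty$ among all optimal solutions. I would note that this is without loss of generality for the conclusion we need, since taking such a tie-breaking optimal solution only increases $\sum_e k_e y^\br k_e$, hence only increases $P = \max_k \sum_e k_e y^\br k_e$, and the bound we want is a lower bound on $P$; alternatively, one simply stipulates that step~\ref{bestcap} uses this tie-breaking rule (which is the reading consistent with the lemma). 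The capacity-vectors $k^1,\ldots,k^\ell$ are defined identically in step~\ref{multcap} and in the statement of Lemma~\ref{cheungs}, with $k^1 = \bone$ and $k^\ell = c$, so the hypotheses are met verbatim.

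With these identifications in place, Lemma~\ref{cheungs} directly gives
\[
\sum_e u_e y^\br u_e \;\geq\; \frac{\opt(c) - \opt(\bone)}{2(1+\e)H_{c_{\max}}} \;=\; \frac{\OPT(c) - \OPT(\bone)}{2(1+\e)H_{c_{\max}}},
\]
which is exactly the claimed inequality. I would also remark that, per Remark~\ref{rem-cscond}, if a different packing-type LP-relaxation of the SWM problem is used (as in the tree/tollbooth application), the same argument applies: its dual is again of the form $\min\ k^Ty + b^Tz$ over a fixed nonempty polyhedron, with $y$ the variables dual to the supply constraints, so Lemma~\ref{cheungs} applies unchanged.

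There is no real obstacle here — the corollary is purely a matter of recognizing that $\dlp{k}$ is an instance of the abstract LP family $\dpr{k}$ and that Algorithm~\ref{gen-alg}'s construction instantiates the setup of Lemma~\ref{cheungs}. The only point requiring a word of care is the tie-breaking convention on optimal dual solutions, which is harmless for the desired lower bound as explained above.
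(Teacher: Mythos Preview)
Your proposal is correct and matches the paper's intended approach: the corollary is stated without proof precisely because it is an immediate instantiation of Lemma~\ref{cheungs} with $\dpr{k}=\dlp{k}$, and you have spelled out that instantiation carefully. Your remark on the tie-breaking convention is apt and correct---the proof of Lemma~\ref{cheungs} in fact only uses $\sum_e k^{j-1}_e y^\br{k^{j-1}}_e \le P$, which holds for any optimal dual solution, so the stronger tie-breaking hypothesis is not actually needed for the bound.
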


We now analyze the rounding procedure for general and subadditive valuations. 
Together with Lemma~\ref{unitcap} and Corollary~\ref{goodcap}, this yields
Theorem~\ref{mainthm}.  

\begin{lemma} \label{round}
Let $\hx$ be the (random) integer solution returned by procedure \round\ in step 3 of
Algorithm~\ref{gen-alg}. Then $\hx$ combined with the pricing scheme $y^\br u$ is a
feasible solution to the profit-maximization problem with probability 1, which achieves
expected profit at least   
(i) $\bigl(1-\frac{1}{e}\bigr)\sum_e u_ey^\br u_e$ for subadditive valuations; and
(ii) $\sum_e u_ey^\br u_e/\al$ for a general valuation class.
\end{lemma}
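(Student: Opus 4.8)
\textbf{Proof proposal for Lemma~\ref{round}.}

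The plan is to verify feasibility first, then bound the expected profit, handling the two valuation classes separately in the profit bound. For feasibility, I would argue that \round\ always produces an allocation $\hx$ with the structural property that if $\hx_{j,T}=1$ then either $x^\br u_{j,T}>0$ (general case) or there is some $S\supseteq T$ with $x^\br u_{j,S}>0$ (subadditive case), and moreover that each item $e$ is allotted to at most $u_e\le c_e$ customers. The capacity bound is immediate by construction: in the subadditive branch we explicitly truncate each item's allocation to $\mu_e=u_e$ customers, and in the general branch each $\hx^r$ is a feasible integer solution to $\plp{u}$, hence respects the supply constraints. Feasibility of the budget constraints then follows from Claim~\ref{cscond}: in the general case part~(i) gives $p(T)=\sum_{e\in T}y^\br u_e\le v_j(T)$ whenever $x^\br u_{j,T}>0$; in the subadditive case part~(ii) gives $\sum_{e\in T}y^\br u_e\le v_j(T)$ whenever $T\subseteq S$ for some $S$ with $x^\br u_{j,S}>0$. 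Since $\sum_S x^\br u_{j,S}\le 1$, each player receives at most one set, so this covers every assigned set. Thus $\hx$ with prices $y^\br u$ is feasible with probability $1$ in both cases.

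For the profit bound in the general case (part~(ii)), I would invoke Lemma~\ref{decomp}/Lemma~\ref{cheungs}'s rounding property: the \round\ procedure returns $\hx^r$ with probability $\ld_r$, and the ``rounding property'' guarantees $\Pr[\hx_{j,T}=1]\ge x^\br u_{j,T}/\al$ for all $j,T$. Hence the expected profit is
\[
\E{\sum_{j,T}\hx_{j,T}\Bigl(\sum_{e\in T}y^\br u_e\Bigr)}=\sum_{j,T}\Pr[\hx_{j,T}=1]\Bigl(\sum_{e\in T}y^\br u_e\Bigr)\ge\frac{1}{\al}\sum_{j,T}x^\br u_{j,T}\sum_{e\in T}y^\br u_e.
\]
Switching the order of summation, $\sum_{j,T}x^\br u_{j,T}\sum_{e\in T}y^\br u_e=\sum_e y^\br u_e\sum_{j,T:e\in T}x^\br u_{j,T}=\sum_e y^\br u_e\cdot u_e$, where the last equality uses part~(iii) of Claim~\ref{cscond} for items with $y^\br u_e>0$ (items with $y^\br u_e=0$ contribute nothing). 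This gives the claimed $\sum_e u_ey^\br u_e/\al$.

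For the subadditive case (part~(i)), the randomized \round\ first picks for each $j$ a set $S$ with probability $x^\br u_{j,S}$, then truncates each over-allocated item. The profit lost at item $e$ is only the ``overflow'' beyond $u_e$ customers. I would fix an item $e$ with $y^\br u_e>0$, let $N_e=\sum_{j,S:e\in S}x^\br u_{j,S}=u_e$ (again by Claim~\ref{cscond}(iii)), and let $X_e$ be the number of players who independently select a set containing $e$; then $\E{X_e}=u_e$, and the number actually keeping $e$ is $\min\{X_e,u_e\}$. The standard computation (as in the analysis of randomized rounding against an integer-valued mean) shows $\E{\min\{X_e,u_e\}}\ge(1-1/e)\,u_e$ when $\E{X_e}=u_e$ is a positive integer — the worst case being $u_e=1$, where $\E{\min\{X_e,1\}}=\Pr[X_e\ge1]\ge1-\prod_j(1-q_j)\ge1-e^{-\sum_j q_j}=1-1/e$ with $q_j=\sum_{S\ni e}x^\br u_{j,S}$, and for larger integer $u_e$ the ratio only improves. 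Summing $y^\br u_e\cdot\E{\min\{X_e,u_e\}}\ge(1-1/e)y^\br u_e u_e$ over all $e$ yields expected profit at least $(1-1/e)\sum_e u_ey^\br u_e$. The derandomization via conditional expectations is routine once the prices $y^\br u$ are fixed.

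The main obstacle is the subadditive profit bound: one must be careful that truncation removes only the overflow (not whole players' allocations from scratch) and that the per-item bound $\E{\min\{X_e,u_e\}}\ge(1-1/e)u_e$ holds for \emph{every} integer $u_e\ge1$, with the $u_e=1$ case being tight. Everything else — feasibility via Claim~\ref{cscond}, and the general-case bound via the rounding property of Lemma~\ref{decomp} — is a direct bookkeeping argument.
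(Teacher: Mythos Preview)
Your approach is essentially the paper's: feasibility via Claim~\ref{cscond} parts~(i)--(ii), the general-case profit bound via the rounding property and Claim~\ref{cscond}(iii), and the subadditive-case profit bound via the per-item inequality $\E{\min\{X_e,u_e\}}\ge(1-1/e)\,u_e$ when $\E{X_e}=u_e$.

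The one real gap is in the subadditive profit bound. You prove the inequality only for $u_e=1$ and then assert that ``for larger integer $u_e$ the ratio only improves''; this monotonicity is plausible and in fact true, but it is not obvious and you do not prove it. The paper does \emph{not} argue by monotonicity in $u_e$. Instead it establishes the bound directly for every $u_f$ via a one-player recursion: writing $g_j=\sum_{S\ni f}x^{\br u}_{j,S}/u_f$, one conditions on whether player~$1$ selects a set containing $f$ and uses $\E{\min\{u_f-1,\,\cdot\,\}}\ge\frac{u_f-1}{u_f}\,\E{\min\{u_f,\,\cdot\,\}}$ to obtain
\[
\E{\min\{u_f,\textstyle\sum_j X_j\}}\ \ge\ u_f\Bigl(1-\prod_{j}(1-g_j)\Bigr),
\]
after which AM--GM and concavity of $a\mapsto 1-(1-a/n)^n$ on $[0,1]$ (with $\sum_j g_j\le 1$) give the $(1-1/e)$ factor. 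If you want to keep your monotonicity route you would need to supply an argument for it; otherwise, the recursive derivation is the clean way to close the gap. Everything else in your proposal matches the paper.
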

\begin{proof}
Feasibility is immediate from Claim~\ref{cscond} since if
a player $j$ is assigned a set $S$ then (i) for a general class of valuations, 
$x^\br u_{j,S}>0$, and (ii) for subadditive valuations, there is some set $T\supseteq S$
such that $x^\br u_{j,T}>0$.
The bound on the profit with a general valuation class follows
from part (iii) of Claim~\ref{cscond} since each item $e$ is assigned to an expected number of
$\bigl(\sum_{j,S:e\in S}x^\br u_{j,S}\bigr)/\al$ players. Note that we only need the  
rounding property in Lemma~\ref{decomp} (and not how it is obtained). 
%
To lower-bound the profit achieved with subadditive valuations, we show that the expected number of players who are allotted an item $f$ is at least $\bigl(1-\frac{1}{e}\bigr)\sum_{j,S:f\in S}x^\br u_{j,S}$ (here, $e$ is the base of the natural logarithm). Notice that
this implies the claim since the expected profit is then at least
$\bigl(1-\frac{1}{e}\bigr)\sum_f y^\br u_f\bigl(\sum_{j,S:f\in S}x^\br u_{j,S}\bigr)=
\bigl(1-\frac{1}{e}\bigr)\sum_f u_fy^\br u_f$ (where the last equality follows from part
(iii) of Claim~\ref{cscond}). 
\medskip

Let $X=(X_{j,S})$ be the random, possibly infeasible
solution computed after the first rounding step.  
Now fix an item $f$. To avoid clutter, we use $x_j$ and $X_j$ below as shorthand for 
$\sum_{S:f\in S}x^\br u_{j,S}$ and $\sum_{S:f\in S}X_{j,S}$ respectively. Also, let
$g_j=\frac{x_j}{u_f}$. 
The expected number of players who are allotted item $f$ after the subsequent ``cleanup'' step is 
\begin{alignat*}{1}
\mathrm{E}[& \min\{u_f, \sum_{j}X_{j}\}]\ 
=\ x_{1}\Bigl(1+\E{\min\{u_f-1,\sum_{j\geq 2}X_{j}\}}\Bigr)+
(1-x_{1})\E{\min\{u_f,\sum_{j\geq 2}X_{j}\}} \\
& \geq\ x_{1}+\E{\min\{u_f,\sum_{j\geq 2}X_{j}\}}
\Bigl(x_{1}\cdot\tfrac{u_f-1}{u_f}+1-x_{1}\Bigr)\ 
=\ x_{1}+\Bigl(1-\tfrac{x_{1}}{u_f}\Bigr)\E{\min\{u_f,\sum_{j\geq 2}X_{j}\}} \\
& \geq\ u_f\Bigl(g_{1}+(1-g_{1})g_{2}+\cdots+(1-g_{1})\ldots(1-g_{n-1})g_{n}\Bigr)\ 
=\ u_f\Bigl(1-\prod_{j=1}^n(1-g_j)\Bigr)
\end{alignat*}
Thus,
{$\E{\min\{u_f,\sum_{j}X_{j}\}} \geq 
u_f\bigl(1-\bigl(1-\tfrac{\sum_j g_j}{n}\bigr)^n\bigr)
\geq u_f\bigl(1-(1-\tfrac{1}{n})^n\bigr)\sum_j g_j 
\geq \bigl(1-\tfrac{1}{e}\bigr)\sum_j x_j$.}
Here the penultimate inequality follows from the fact that $1-\bigl(1-\frac{a}{n}\bigr)^n$
is a concave function of $a$, and hence is at least
$\bigl(1-(1-\frac{1}{n})^n\bigr)a$ when $a\in[0,1]$ (note that $\sum_j g_j\leq 1$). 
\end{proof}

\vspace{-5pt}
\begin{theorem} \label{mainthm}
Algorithm~\ref{gen-alg} runs in time $\poly\bigl(\text{input size},\frac{1}{\e}\bigr)$ and
achieves an
\vspace{-1.5ex}
\begin{list}{(\roman{enumi})}{\usecounter{enumi} \itemsep=-0.5ex 
\settowidth{\labelwidth}{\em (ii)} \leftmargin=\labelwidth 
\addtolength{\leftmargin}{\labelsep} 
} 
\item $O(\log c_{\max})$-approximation for subadditive valuations, using the 
2-approximation algorithm for the SWM problem with subadditive valuations 
in~\cite{F06}; 

\item $O(\al\log c_{\max})$-approximation for a general valuation class given an
integrality-gap-verifying $\al$-approximation algorithm for the SWM problem.
\end{list}  
\end{theorem}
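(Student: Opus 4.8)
\textbf{Proof proposal for Theorem~\ref{mainthm}.}

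The plan is to simply assemble the pieces already developed in Section~\ref{alg}. First I would verify the running-time claim. The only non-obvious point is that there are $\ell=O(\frac{\log c_{\max}}{\e})$ capacity-vectors $k^1,\ldots,k^\ell$, since $k^j_{e^*}$ grows geometrically (by roughly a $(1+\e)$ factor, with the ceiling only helping) from $1$ up to $c_{\max}\le n$ until it saturates; this needs a brief induction using the monotonicity facts from Claim~\ref{maxr}'s proof (namely that the coordinates that have not yet saturated all share a common value). For each such $k^j$ we solve $\dlp{k^j}$ (and, for the chosen $u$, $\plp{u}$), which is polynomial via the demand oracle as discussed after the LP is introduced; the rounding procedure \round\ and its derandomization (method of conditional expectations, or picking the best solution in the Carr--Vempala decomposition) are polytime; and step~\ref{unitc} invokes the given polytime $\al$-approximation. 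So the whole algorithm runs in $\poly(\text{input size},\frac1\e)$.

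Next I would establish the approximation guarantee, working with a general valuation class (part~(ii)), and then specialize. Let $\OPT=\OPT(c)$. The algorithm returns the better of two solutions. By Lemma~\ref{round}(ii), solution~(1) has expected profit at least $\frac{1}{\al}\sum_e u_e y^\br u_e$, and by Corollary~\ref{goodcap} this is at least $\frac{\OPT(c)-\OPT(\bone)}{2\al(1+\e)H_{c_{\max}}}$. By Lemma~\ref{unitcap}, solution~(2) has profit at least $\frac{\OPT(\bone)}{\al}$. Since the algorithm takes the maximum of the two, its profit is at least
\[
\max\Bigl\{\tfrac{\OPT(c)-\OPT(\bone)}{2\al(1+\e)H_{c_{\max}}},\ \tfrac{\OPT(\bone)}{\al}\Bigr\}
\ \ge\ \tfrac{1}{2}\Bigl(\tfrac{\OPT(c)-\OPT(\bone)}{2\al(1+\e)H_{c_{\max}}}+\tfrac{\OPT(\bone)}{\al}\Bigr)
\ \ge\ \tfrac{\OPT(c)}{4\al(1+\e)H_{c_{\max}}},
\]
using $H_{c_{\max}}\ge 1$ to absorb the $\OPT(\bone)$ terms; since $\OPT(c)\ge\OPT_\swm$ is an upper bound on the optimal profit and $H_{c_{\max}}=\Theta(\log c_{\max})$, this is an $O(\al\log c_{\max})$-approximation (for any fixed $\e$, e.g.\ $\e=1$). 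For expected guarantees one invokes the stated derandomizations so that the bound holds deterministically.

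For part~(i), the same argument applies verbatim except that the rounding step uses the subadditive version of \round: Lemma~\ref{round}(i) gives solution~(1) profit at least $(1-\frac1e)\sum_e u_e y^\br u_e \ge \frac{(1-1/e)(\OPT(c)-\OPT(\bone))}{2(1+\e)H_{c_{\max}}}$, and the $\al$-approximation invoked in step~\ref{unitc} is the LP-based $2$-approximation for SWM with subadditive valuations from~\cite{F06}, so $\al=2$ in Lemma~\ref{unitcap}'s guarantee $\OPT(\bone)/\al$. Combining exactly as above yields $O(\log c_{\max})$. The one thing to be careful about — the main (minor) obstacle — is that part~(i) does \emph{not} require an integrality-gap-verifying algorithm: feasibility of the rounded solution for subadditive valuations rests on Claim~\ref{cscond}(ii) (each assigned set $S$ sits inside some $T$ with $x^\br u_{j,T}>0$, so $p(S)=\sum_{e\in S}y^\br u_e\le v_j(S)$), and the profit bound uses only the elementary probabilistic ``cleanup'' analysis of Lemma~\ref{round}(i) together with Claim~\ref{cscond}(iii); so the $2$-approximation of~\cite{F06} is used only in step~\ref{unitc} for the $\OPT(\bone)$ side and nowhere is it asked to verify an integrality gap. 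Spelling out this separation of roles is what makes part~(i) genuinely stronger than a special case of part~(ii).
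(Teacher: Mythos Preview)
Your proposal is correct and follows essentially the same approach as the paper's proof: combine Lemma~\ref{unitcap}, Corollary~\ref{goodcap}, and Lemma~\ref{round} via the $\max$ of the two computed solutions, and bound that $\max$ below by the average to extract the $\OPT(c)/\bigl(4\al(1+\e)H_{c_{\max}}\bigr)$ guarantee. Your write-up adds useful detail the paper omits (the $\ell=O(\tfrac{\log c_{\max}}{\e})$ bound on the number of capacity vectors and the explicit justification of polynomial running time), and your closing paragraph correctly identifies why part~(i) does not require an integrality-gap-verifying algorithm---this separation of roles is exactly the point.
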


\begin{proof}
By Lemmas~\ref{unitcap}, \ref{round}, and Corollary~\ref{goodcap}, for subadditive
valuations, the profit obtained is at least  
$
\max\bigl\{\tfrac{\OPT(\bone)}{2},(1-\tfrac{1}{e})(\OPT(c)-\OPT(\bone))/(2(1+\e)H_{c_{\max}})\bigr\}
\geq\OPT(c)/\bigl(4(1+\e)H_{c_{\max}}\bigr)$. 
Similarly for a general valuation class, we obtain profit at least
$
\frac{1}{\al}\cdot\max\bigl\{\OPT(\bone),(\OPT(c)-\OPT(\bone))/(2(1+\e)H_{c_{\max}})\bigr\}
\geq\OPT(c)/\bigl(4\al(1+\e)H_{c_{\max}}\bigr)$. 
\end{proof}

\begin{remark} \label{algprop}
Note that if the allocation $(S_1,\ldots,S_n)$ returned by Algorithm~\ref{gen-alg} is
obtained via $\round$, then $S_j$ is always a subset of a {\em utility-maximizing set} of
$j$, and with a general valuation class, if $S_j\neq\es$, it is a utility-maximizing set
(under the computed prices). 
(For submodular valuations, this implies that 
$v_j(S_j)-v_j(S_j\sm\{e\})\geq\text{(price of $e$)}$ for all  
$e\in S_j$.) 
If $(S_1,\ldots,S_n)$ is obtained in step~\ref{unitc}, then we may assume that 
$v_j(S_j)=\max_{T\sse S_j}v_j(T)$ (since we have a demand oracle for $v_j$); 
with a general valuation class, this solution 
can be modified to yield an approximate ``one-sided envy-freeness''
property. 
We compute $(S_1,\ldots,S_n)$ by rounding $x^\br 1$ as described in
Lemma~\ref{decomp}. 
Now choose prices $\{p'_e\}$ (arbitrarily) such that $p'\geq y^\br 1$ and
$p'(S_j)=\max\{y^\br 1(S_j),(1-\e)v_j(S_j)\}$ for every $j$.
Since any non-empty $S_j$ is a
utility-maximizing set under $y^\br 1$, it follows that (a) $p'$ is a valid item-pricing
yielding profit at least $(1-\e)\sum_j v_j(S_j)$; (b) if $S_j\neq\es$, then the utility $j$
derives from $S_j$ under $p'$ is at least $\e(\text{max utility of $j$ under $p'$})$. 

These properties prevent a kind of ``cheating'' that may occur in 
profit-maximization problems. 
To elaborate, although monotonicity of the valuation is an innocuous assumption for the SWM
problem, with profit-maximization this can lead to the following artifact: a
customer $j$ desires a set $A$ but is allotted $B\supseteq A$ (with $v_j(B)=v_j(A)$) 
and items in $A$ have 0 price and items in $B\sm A$ have positive prices, so that $j$
ends up paying for items she never wanted! 
The above properties ensure that (we may assume that) 
{\em the solution computed by our algorithm does not have this artifact}. In fact, if
$j$ desires one of $k$ sets $A_1,\ldots,A_k$, then our algorithm will assign $j$ a set
$S_j\in\{\es,A_1,\ldots,A_k\}$. We could also prevent this artifact by dropping
monotonicity of the valuations. 
\end{remark}

\vspace{-20pt}
\subsection{Applications} \label{apps}

\vspace{-5pt}
\paragraph{Arbitrary valuation functions.}
The integrality gap of \eqref{ca-p} is known to be $\Tht(\sqrt m)$, and there are
efficient (deterministic) algorithms that verify this integrality
gap~\cite{P88,KS04}. So Theorem~\ref{mainthm} immediately yields an
$O(\sqrt m\log c_{\max})$-approximation algorithm for the profit-maximization problem for 
combinatorial auctions with arbitrary valuations.

\vspace{-5pt}
\paragraph{Non-single-minded tollbooth problem on trees.} 
In this profit-maximization problem, 
items are \emph{edges} of a
tree and customers desire paths of the tree. More precisely, let $\Pc$ denote the set
of all paths in the tree (including $\es$). Each customer $j$ has a value $v_j(S)\geq 0$
for path $S\in\cP$, and may be assigned any (one) path of the tree. Notice that this leads
to the  {\em structured} valuation function $v_j:2^{[m]}\mapsto\R_+$ where 
$v_j(T)=\max\{v_j(S): S\text{ is a path in }T\}$. Note that $v_j$ need {\em not} be
subadditive. 
We use Algorithm~\ref{gen-alg} to obtain an $O(\log c_{\max})$-approximation 
guarantee by formulating an LP-relaxation of the SWM problem that is tailored to this
setting and designing an $O(1)$-integrality-gap-verifying algorithm for this LP. 

The ``new'' LP is almost identical to \eqref{ca-p}, except that we now 
{\em only have variables $x_{j,S}$ for $S\in\Pc$}.
Correspondingly, in the dual \eqref{ca-d}, we only have a constraint for $(j,S)$ when
$S\in\Pc$. Clearly, this new LP satisfies the properties stated in
Remark~\ref{rem-cscond}, so parts (i) and (iii) of Claim~\ref{cscond} hold for this new
LP, and so does Lemma~\ref{decomp}.
Thus, we only need to design an $O(1)$-integrality-gap-verifying algorithm for this new
LP to apply Theorem~\ref{mainthm}.
Let $\{v_j:\Pc\mapsto\R_+\}_{j\in[n]}$ be any instance and $x^*$ be an optimal
solution to this new LP for this instance. We design a randomized algorithm that returns a 
(random) integer solution $\hx$ of expected objective value
$\Omega(\sum_{j,S\in\Pc}v_j(S)x^*_{j,S})$. 
This algorithm can be derandomized using the work of~\cite{Sivakumar02}; 
this yields an $O(1)$-integrality-gap-verifying algorithm for the new LP.
(We have not attempted to optimize the approximation factor.)
Our algorithm is a generalization of the
one proposed by~\cite{CCGK07} for unsplittable flow on a line.  
Root the tree at an arbitrary node. Define the \textit{depth}
of an edge $(a,b)$ to be the minimum of the distances of $a$ and $b$ to the root. 
Define the depth of an edge-set $T$ to be the minimum depth of any edge in $T$. 
Let $\alpha=0.01$. 
\begin{list}{\arabic{enumi}.}{\usecounter{enumi} \itemsep=0ex \leftmargin=3ex}
\item 
Independently, for every customer $j$, choose at most one set (i.e., path) $S$, by picking
$S$ with probability $\alpha x^*_{j,S}$. Let $S_j$ be the set assigned to $j$. 
(If $j$ is unassigned, then $S_j=\es$.) 
 
\item 
Let $W=\emptyset$. Consider the sets $\{S_j\}$ in non-decreasing order of their
depth (breaking ties arbitrarily). For each set $T=S_j$, if $T$ can be added to 
$\{S_i:~i\in W\}$ without violating any capacities, add $j$ to $W$; otherwise
discard $T$.    
\end{list}
Let $\hx$ be the (random) integer solution computed.
Using a similar argument as in \cite{CCGK07}, we prove in Appendix~\ref{append-apps} that
if we select $\alpha=0.01$, then $\Pr[\hx_{j,S}=1]\geq 0.00425 x^*_{j,S}$, so 
$\E{\sum_{j,S\in\Pc}v_j(S)\hx_{j,S}}\geq 0.00425\cdot\sum_{j,S\in\Pc}v_j(S)x^*_{j,S}$. 
We thus obtain the following theorem as a corollary of Theorem~\ref{mainthm}. 

\begin{theorem}
\label{thm:envyfree_pricing_on_trees}
There is an $O(1)$-integrality-gap-verifying algorithm (for the new LP mentioned above).
This yields an $O(\log c_{\max})$-approximation algorithm for the non-single-minded
tollbooth problem on trees.
\end{theorem}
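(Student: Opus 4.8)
\textbf{Proof plan for Theorem~\ref{thm:envyfree_pricing_on_trees}.}
The statement has two parts: (a) the randomized rounding procedure described above is an $O(1)$-integrality-gap-verifying algorithm for the new path-LP; and (b) feeding this into Theorem~\ref{mainthm}(ii) yields the $O(\log c_{\max})$-approximation for the non-single-minded tollbooth problem on trees. Part (b) is essentially immediate once part (a) is established: the new LP satisfies the conditions of Remark~\ref{rem-cscond} (it has a supply constraint~\eqref{supply} restricted to paths, and is a packing LP), so Claim~\ref{cscond}(i),(iii) and Lemma~\ref{decomp} hold for it, and hence Algorithm~\ref{gen-alg} with $\al=O(1)$ runs and its guarantee from Theorem~\ref{mainthm}(ii) gives profit $\OPT_\swm/O(\log c_{\max})$. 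So the entire content is in part (a), and concretely in the deferred claim that with $\alpha=0.01$ one has $\Pr[\hx_{j,S}=1]\geq 0.00425\,x^*_{j,S}$ for every customer $j$ and path $S$.

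For part (a), I would fix a customer $j$ and a path $S$ with $x^*_{j,S}>0$, condition on $j$ having selected $S$ in step~1 (which happens with probability $\alpha x^*_{j,S}$), and lower-bound the conditional probability that $j$ survives the greedy ``cleanup'' in step~2 by a constant. The survival of $j$ fails only if, at the moment $S=S_j$ is considered, some edge $e\in S$ is already saturated by previously-committed paths $S_i$, $i\in W$, that have depth at most $\level(S)$. The crux of the Chakrabarty--Chekuri--Gupta--Kumar argument is structural: on a tree, if we process paths in nondecreasing order of depth, then at edge $e$ the set of earlier-committed paths passing through $e$ all contain the unique ``topmost'' edge of $e$ (the edge of minimum depth on any such path lies on the root-path of $e$), so they are ``nested'' through a common edge near the root and can be controlled. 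I would therefore: (i) for each edge $e\in S$, let $B_e$ be the event that $e$ is blocked when $S$ is processed, bound $\Pr[B_e \mid j\text{ picked }S]$ by the expected number of competing committed paths through $e$ divided by $c_e$, then use the capacity constraint $\sum_{i}\sum_{S'\ni e}x^*_{i,S'}\leq c_e$ together with the $\alpha$-scaling (and the fact that we only count depth-$\le\level(S)$ paths, which on a tree are exactly those sharing the relevant near-root edge) to show $\Pr[B_e]\leq$ (a small constant); (ii) union-bound over the edges $e\in S$ that are ``relevant'' — here one again uses the tree/nesting structure so that the union is effectively over a single critical edge rather than all $|S|$ edges, which is what keeps the bound constant rather than $\Theta(1/|S|)$; (iii) conclude $\Pr[j\text{ survives}\mid j\text{ picked }S]\geq 1-\sum_e\Pr[B_e]\geq 0.425$, hence $\Pr[\hx_{j,S}=1]\geq 0.00425\,x^*_{j,S}$. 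Summing $v_j(S)$ against this inequality gives $\E{\sum_{j,S\in\Pc}v_j(S)\hx_{j,S}}\geq 0.00425\sum_{j,S\in\Pc}v_j(S)x^*_{j,S}$, i.e.\ objective value at least LP-optimum$/\al$ for $\al=1/0.00425=O(1)$. The derandomization via pessimistic estimators / the method of~\cite{Sivakumar02} is routine given that all the bad events are ``small'' and the estimator is efficiently computable.

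The main obstacle is step (ii) of the above outline — controlling the blocking probability \emph{across all edges of the path $S$ simultaneously} with only a constant loss. A naive union bound over $|S|$ edges would cost a factor of $|S|$, which could be as large as $m$; the whole point of the depth-ordering and the tree structure is that the committed paths that can block $S$ at \emph{different} edges are not independent obstructions but are organized around a single minimum-depth edge of $S$, so the ``effective'' number of blocking opportunities is $O(1)$. Making this precise requires the key geometric lemma of~\cite{CCGK07} adapted from paths to trees: when $S_j$ has depth $d$, any earlier-committed path $S_i$ intersecting $S_j$ must contain the (unique) depth-$d$ edge of $S_j$ on the root-to-$S_j$ side, so a single ``token'' charging argument at that edge suffices. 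Once that structural fact is in hand, the probability bookkeeping is a short calculation with the constants $\alpha=0.01$ and $0.00425$ chosen to make the final inequality go through, exactly as deferred to Appendix~\ref{append-apps}. Everything else — feasibility of $\hx$ (it is feasible by construction of step~2), the reduction to Theorem~\ref{mainthm}, and the derandomization — is straightforward.
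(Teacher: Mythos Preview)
Your reduction of part (b) to Theorem~\ref{mainthm}(ii) is correct, as is the overall structure of conditioning on $S_j=S$ and bounding the rejection probability in step~2. The structural fact you invoke is also right (this is Fact~\ref{ancestor-fact} in the appendix): any earlier-processed path intersecting $S$ at an edge $e$ must also contain the path from $e$ up to the top edge of $S$ on that branch. But your step~(ii) has a genuine gap. The conclusion that ``a single token charging argument at that [top] edge suffices'' holds only for \emph{uniform} capacities. With non-uniform capacities, the fact that every earlier path through $e$ also passes through the top edge $\ell_1$ does \emph{not} let you replace the blocking test at $e$ by one at $\ell_1$: rejection at $e$ means $c_e$ committed paths through $e$, and $c_e$ may be far smaller than $c_{\ell_1}$. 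Concretely, if the capacities along one branch of $S$ are $2^k,2^{k-1},\ldots,1$, a single competing path through all these edges saturates the bottom edge while leaving the top edge nearly empty, so checking only $\ell_1$ misses the rejection entirely.

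What the paper actually does requires two ingredients you are missing. First, it does not reduce to one edge but to a sparse subset $S'\subseteq S$: on each branch, include the top edge and then each subsequent edge whose capacity is at most half that of the previously included edge. By Fact~\ref{ancestor-fact} together with the factor-$2$ rule, if $S$ is rejected at some $e$ then the ``bad event'' $\cE_{e'}:=\bigl\{\sum_{j',A\ni e'}X_{j',A}\ge c_{e'}/2\bigr\}$ must occur at the nearest ancestor $e'\in S'$ of $e$ (Lemma~\ref{lem:upperbound_noninclusion}), so it suffices to union-bound over $S'$. Second, Markov is too weak even over $S'$, since $|S'|$ can be $\Theta(\log c_{\max})$ and Markov only gives an $O(\alpha)$ bound per edge. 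The paper instead applies a Chernoff bound to obtain $\Pr[\cE_{e'}]<0.231^{c_{e'}}$, and because capacities along $S'$ decrease geometrically one gets $\sum_{e'\in S'}\Pr[\cE_{e'}]\le 2\sum_{i\ge 0}0.231^{2^i}\le 0.575$ (Lemma~\ref{lem:upperbound_badevents}). This yields survival probability at least $0.425$ and hence the constant $0.00425$. The paper does remark that in the uniform-capacity case your two-top-edges reduction works and even gives a better constant, but the general case genuinely needs both the $S'$ construction and concentration rather than Markov.
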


We remark that since the above algorithm satisfies the rounding property in
Lemma~\ref{decomp}, we can directly use it to round $x^\br u$ (more efficiently)
to a feasible allocation in step~\ref{rounding} of Algorithm~\ref{gen-alg}, instead of 
using the Carr-Vempala decomposition procedure (which relies on the ellipsoid method). 

\section{Refinement for the non-single-minded highway problem} \label{s-line} 

In this section, we describe a different approach 
that does not use $\OPT_{\swm}$ 
as an upper bound on the optimum profit. Instead our approach is based on 
using an exponential-size {\em configuration LP} 
to decompose the original instance into various
smaller (and easier) instances. We use this 
to obtain an 
$O(\log m)$-approximation for the non-single-minded (non-SM) highway problem (recall that
this is the tollbooth problem on a path, so customers desire intervals) with subadditive 
valuations, and arbitrary valuations but unlimited supply (Theorem~\ref{t-line}). Note
that this is {\em incomparable} to the $O(\log n)$-approximation obtained earlier for the
tollbooth problem on trees (as $c_{\max}\leq n$); the number of distinct sets is $O(m^2)$
but the number of customers can be much larger (or smaller). Also, 
an $O(\log m)$-approximation is impossible to obtain using the approach in
Section~\ref{alg}, and in general any approach that uses the optimum of the (integer or
fractional) SWM problem as an upper bound, because, as mentioned earlier, there is a
simple example with just one item and $c_{\max}=n$, 
where the SWM-optimum is an $H_{c_{\max}}$-factor away from the optimum profit.    
Let $\Pc$ be the set of all intervals on the line (with $m$ edges).
As with the non-SM tollbooth problem on trees (in Section~\ref{apps}), each
customer $j$ has a value for each subpath (which is now an interval). So we view $v_j$
as a function $v_j:\Pc\mapsto\R_+$, and {\em subadditivity} means that for any two intervals
$A$, $B$, where $A\cup B$ is also an interval, we have $v_j(A\cup B)\leq v_j(A)+v_j(B)$. 


We outline the proof of Theorem~\ref{t-line}.
First, we use a simple procedure (Proposition~\ref{pro:decomposition}) to partition the
intervals into $O(\log m)$ disjoint sets, where each set is a union of item-disjoint 
``cliques''. Here, a clique is a set of paths that share a common edge; two cliques
$\Pc_1$ and $\Pc_2$ are item-disjoint, if $A\cap B=\es$ for all $A\in\Pc_1,\ B\in\Pc_2$.  

\begin{proposition}[see~\cite{BFNW02}]
\label{pro:decomposition}
A set of $k$ intervals on the line can be partitioned into at most $\lfloor\log
(k+1)\rfloor$ sets, each of which is a union of item-disjoint cliques.
\end{proposition}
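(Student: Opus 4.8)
The plan is to argue by induction on $k$, the number of intervals, using a recursive halving strategy that is standard for interval graphs (this is essentially the ``interval colouring by binary search'' idea). Let $I_1,\ldots,I_k$ be the given intervals, and look at the leftmost endpoints. I would pick a coordinate (a ``splitting point'') $x$ such that at most $\lceil k/2\rceil$ of the intervals lie entirely to the left of $x$ and at most $\lceil k/2\rceil$ lie entirely to the right of $x$; concretely, sort the intervals by left endpoint and take $x$ to be (just after) the left endpoint of the $\lceil k/2\rceil$-th interval in this order. The intervals that straddle $x$ — i.e., contain the point $x$ — all share the common edge immediately to the right of $x$, so they form a single clique $Q$. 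The remaining intervals split into a left group $L$ and a right group $R$, with $|L|,|R|\le\lfloor k/2\rfloor$ (one has to be a little careful with the exact count so that the arithmetic with $\lfloor\log(k+1)\rfloor$ works out), and every interval in $L$ is item-disjoint from every interval in $R$ since one lies strictly left of $x$ and the other strictly right of $x$.

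Next I would recurse on $L$ and on $R$ independently. By the induction hypothesis each of $L$ and $R$ can be partitioned into at most $\lfloor\log(\lfloor k/2\rfloor+1)\rfloor$ sets, each a union of item-disjoint cliques. The key observation is that I can \emph{merge} the partition of $L$ with the partition of $R$ class-by-class: if $\mathcal{P}^L_1,\ldots$ and $\mathcal{P}^R_1,\ldots$ are the respective classes, then $\mathcal{P}^L_t\cup\mathcal{P}^R_t$ is again a union of item-disjoint cliques, because cliques coming from $L$ are item-disjoint from cliques coming from $R$. So the combined partition of $L\cup R$ still uses at most $\lfloor\log(\lfloor k/2\rfloor+1)\rfloor$ classes. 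Finally I add the straddling clique $Q$ as one additional class, giving a partition of all $k$ intervals into at most $\lfloor\log(\lfloor k/2\rfloor+1)\rfloor+1$ classes.

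It then remains to check the numerics: $\lfloor\log(\lfloor k/2\rfloor+1)\rfloor+1\le\lfloor\log(k+1)\rfloor$. Since $\lfloor k/2\rfloor+1\le(k+1+1)/2 \le (k+1)$ when... more precisely $\lfloor k/2\rfloor \le (k-1)/2$ for... I would just verify $2(\lfloor k/2\rfloor+1)\le k+1$ fails by one, so instead bound $\lfloor\log(\lfloor k/2\rfloor +1)\rfloor \le \lfloor \log((k+1)/2)\rfloor = \lfloor\log(k+1)\rfloor - 1$, which holds because $\lfloor k/2\rfloor+1\le \lceil (k+1)/2\rceil$ and $\lfloor\log\lceil t/2\rceil\rfloor=\lfloor\log t\rfloor-1$ for integer $t\ge 2$ — this last identity being the routine arithmetic fact to nail down, and the base case $k=1$ giving $0=\lfloor\log 2\rfloor - 1$, a single clique. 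The main obstacle I anticipate is purely bookkeeping: getting the split so that \emph{both} sides have size at most $\lfloor k/2\rfloor$ (not $\lceil k/2\rceil$) despite the straddling intervals, and then making the floor/log arithmetic come out to exactly $\lfloor\log(k+1)\rfloor$ rather than $\lceil\log k\rceil$ or similar; I would handle the parity cases ($k$ even vs. odd) separately if needed, or choose the split index as $\lfloor k/2\rfloor$ or $\lceil k/2\rceil$ depending on parity. Since the proposition is cited to \cite{BFNW02}, I expect the paper to simply invoke it, but the above is the self-contained argument.
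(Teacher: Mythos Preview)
The paper does not actually give a proof of this proposition; it simply cites \cite{BFNW02} and uses the result only as an $O(\log m)$ bound. So there is nothing to compare your argument against in the paper itself, and your recursive-halving outline is indeed the standard way to prove statements of this type.

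That said, there is a real (if minor) arithmetic gap in your plan. The identity you invoke, $\lfloor\log\lceil t/2\rceil\rfloor=\lfloor\log t\rfloor-1$ for integers $t\ge 2$, is false: take $t=3$, where $\lfloor\log 2\rfloor=1$ but $\lfloor\log 3\rfloor-1=0$; or $t=7$, where $\lfloor\log 4\rfloor=2$ but $\lfloor\log 7\rfloor-1=1$. Correspondingly, the inequality $\lfloor\log(\lfloor k/2\rfloor+1)\rfloor+1\le\lfloor\log(k+1)\rfloor$ that your induction needs fails exactly at $k=2,6,14,\dots$ (i.e., $k=2^t-2$). For instance at $k=6$ your split can leave $\max(|L|,|R|)=3$, and since $P_3$ genuinely needs two classes you would output $3$ classes, whereas $\lfloor\log 7\rfloor=2$. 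Handling parity or shifting the split index by one does not rescue this: with six pairwise-disjoint intervals no point is covered by two of them, so you cannot force $|Q|\ge 2$, and with a $P_6$-type configuration you cannot force $\max(|L|,|R|)\le 2$ either.

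What your recursion actually proves is the bound $\lceil\log_2(k+1)\rceil$ (equivalently $\lfloor\log_2 k\rfloor+1$), which is at most one more than the stated $\lfloor\log_2(k+1)\rfloor$. For the paper's purposes this is entirely adequate, since the proposition is only used to get $O(\log m)$ classes. If you want the exact floor bound you would need an extra idea beyond naive halving (e.g., handling the case where the chosen point is covered by only one interval by merging that singleton clique into one of the recursively produced classes rather than spending a fresh class on it).
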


\noindent
Thus, we can decompose $\Pc$ into $O(\log m)$ sets; to get an 
$O(\log m)$-approximation algorithm, it suffices to give an $O(1)$-approximation 
algorithm when the intervals form a union of item-disjoint cliques. 
It is unclear how to achieve a near-optimal solution even in this structured setting,
as there are various {\em dependencies between the cliques} in a set: 
a customer can only be assigned an interval in {\em one} of the cliques. 
We solve this ``union-of-cliques'' pricing problem as follows. 
We first trim each clique $\Pc_i$ in our set randomly to a one-sided half-clique by
(essentially) ignoring the items to the left or right of the common edge of
$\Pc_i$. The details of this truncation are slightly different depending on whether we
have subadditive or arbitrary valuations (see the proof of
Lemma~\ref{lem:clique-pricing}), but a key 
observation is that, in expectation, we only lose a factor of 2 by this truncation. 
We formulate an LP-relaxation for the pricing problem involving these half-cliques.  
Solving this LP requires the ellipsoid  method, where the separation oracle is provided by
the solution to another (easier) pricing problem, where the {\em (half) cliques are now
decoupled}. We devise an algorithm based on dynamic programming (DP) to compute a
near-optimal solution to this pricing problem, which then yields a near-optimal solution
to the LP (Lemma~\ref{lem:approx_solution}). Finally, we argue that this near-optimal fractional
solution can be rounded to an integer solution losing only an $O(1)$-factor
(Lemma~\ref{lem:random_rounding}). Combining the various ingredients, we obtain the desired
$O(1)$-approximation for the ``union-of-cliques'' pricing problem, which in turn yields
an $O(\log m)$-approximation for our \nolinebreak \mbox{original non-single-minded highway problem.}


We assume in the following that the edges of the line are numbered $1,2,\ldots,m$, from left to right.
\begin{lemma}
\label{lem:clique-pricing}
There is a $16(1+\frac{1}{m})$-approx. algorithm for the non-SM highway
problem 
when intervals form a union of item-disjoint cliques for (i) subadditive
valuations with limited supply; (ii) arbitrary valuations with unlimited supply. 
\end{lemma}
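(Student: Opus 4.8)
The plan is to reduce the ``union-of-cliques'' pricing problem to a simpler ``decoupled'' pricing problem via a configuration LP and randomized rounding. First I would apply the random truncation: for each clique $\Pc_i$ with common edge $e_i$, flip a fair coin to decide whether to keep only the portion of each interval lying (weakly) to the left of $e_i$ or only the portion lying (weakly) to the right; this turns each clique into a one-sided half-clique, and since every interval straddling $e_i$ contributes to the solution on at least one side, an optimal solution restricted to the chosen sides retains, in expectation, at least half its value (this is where the subadditive case and the arbitrary/unlimited-supply case differ slightly — in the subadditive case one splits the value of a straddling interval across its two halves using subadditivity, while in the unlimited-supply arbitrary case one can just keep the interval on whichever side it is assigned). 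So it suffices, up to a further factor of $2$, to solve the pricing problem on a union of item-disjoint half-cliques.

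Next I would write the configuration LP for the half-clique instance: for each clique $\Pc_i$ there is a variable for each ``configuration'' (a choice of price for $e_i$ together with a consistent assignment of customers to intervals of $\Pc_i$ respecting budget and capacity constraints), plus linking constraints saying each customer is served in at most one clique. This LP has exponentially many variables, so I would solve its dual approximately by the ellipsoid method: the separation oracle asks, given dual prices on the customers, for the most profitable single-clique configuration net of those customer charges — i.e. exactly the \emph{decoupled} pricing problem. For that decoupled problem I would give a dynamic program over the edges of the half-clique, processing intervals by their (single) free endpoint and tracking the relevant capacity usage; this is the step invoked as Lemma~\ref{lem:approx_solution}, and it yields a near-optimal solution to the configuration LP via the standard ellipsoid-with-approximate-separation argument (one only ever needs approximate separation because we only need an approximately optimal LP value). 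Finally I would round the near-optimal fractional LP solution: sample a configuration for each clique independently with probabilities proportional to the LP mass, then resolve conflicts (a customer chosen in two cliques, or capacities exceeded within a clique after conflict resolution) while losing only a constant factor, as in Lemma~\ref{lem:random_rounding}; combining the $\tfrac12$ from truncation, the $\tfrac12$ from half-clique value, the LP-approximation loss, and the rounding loss gives the claimed $16(1+\tfrac1m)$ factor.

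The main obstacle I anticipate is the truncation-and-rounding interface together with the capacity bookkeeping: one must ensure that after independently sampling a configuration per clique the overall assignment can be repaired into a feasible solution (each customer served once, each edge within its supply) without destroying more than a constant fraction of the expected profit, and the cleanest way to do this is to argue that each customer is served with probability at least a constant times her LP-fractional service and that the edge-capacity violations within a clique can be fixed by dropping a bounded fraction of served customers. A secondary subtlety is making the DP run in polynomial time when supplies are large — this is handled because within a single (half-)clique the only shared resources are the edges on one side of $e_i$ and the capacity constraints can be enforced incrementally, so the DP state size stays polynomial. Threading the two valuation cases (subadditive/limited-supply vs. arbitrary/unlimited-supply) through all four steps uniformly, rather than duplicating the whole argument, is mostly a matter of isolating the one place — the value-accounting in the truncation step — where they genuinely differ.
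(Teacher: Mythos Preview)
Your high-level outline matches the paper's approach---truncate to half-cliques, write a configuration LP, solve it approximately via ellipsoid with an approximate separation oracle, and round---but there is a genuine missing ingredient: \emph{price discretization}. Before writing the LP, the paper restricts attention to interval prices of the form $\mathcal{B}/2^q$ down to a floor of $\mathcal{B}/(mn)$; this costs a factor of $2(1+\tfrac{1}{m})$ and is exactly where the $(1+\tfrac{1}{m})$ in the stated bound comes from. Without this step your LP is not even finite (the paper's LP has variables $y_{jp}$ indexed by price levels), and more importantly the $2$-approximation for the separation subproblem in Lemma~\ref{lem:approx_solution} hinges on the geometric spacing: the paper relaxes the one-interval-per-customer constraint to one-interval-per-\emph{price}-per-customer and bounds the resulting overcount via $\sum_{q\ge 0}\max\{p\cdot 2^{-q}-\beta_j,0\}\le 2\max\{p-\beta_j,0\}$. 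Your DP sketch (``process intervals by free endpoint, track capacity usage'') does not explain how the customer--price coupling is broken, and without discretized prices there is no obvious relaxation that yields a polynomial DP with only a constant-factor loss.

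Two smaller corrections. A configuration $R\in\mathcal{R}_i$ is not ``a choice of price for $e_i$'' but a pricing of \emph{all} the nested intervals of the half-clique $\mathcal{H}_i$ together with a feasible allocation; the half-clique still has many edges and many intervals, and the monotonicity constraint $p(T_1)\ge\ldots\ge p(T_\ell)$ is what the DP exploits. And in the rounding (Lemma~\ref{lem:random_rounding}) there are no intra-clique capacity violations to repair, since each sampled $R$ is already feasible for its half-clique; the only conflict is a customer selected by several cliques, resolved by keeping her highest-price assignment.

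Finally, your factor accounting is muddled: there is no separate ``half-clique value'' loss beyond the truncation. The four factors are $2$ (truncation to half-cliques), $2(1+\tfrac{1}{m})$ (price discretization), $2$ (approximate separation oracle for the voucher-pricing problem), and $2$ (in fact $\tfrac{e}{e-1}$) from the randomized rounding, giving $16(1+\tfrac{1}{m})$.
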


%
%
\begin{proof}
Let $\mathcal{A} = \bigcup_{i} \cP_i$ be a set of intervals where the $\cP_i$s are
item-disjoint cliques. 
Let $e_i$ denote the common edge of $\cP_i$, and $\ell_i$ and $r_i$ be the leftmost and
rightmost edge used by some interval of $\Pc_i$. 
We first trim the cliques to one-sided half-cliques. 
For every clique $\Pc_i$ independently, we discard one of the ``halves'' of $\Pc_i$
with probability $1/2$. More precisely, for subadditive valuations, discarding the right
half 
means that we truncate each interval
$S\in\Pc_i$ to $S\cap[\ell_i,e_i]$ to obtain the half-clique $\Hc_i$ of truncated
intervals; when discarding the left half we set $\Hc_i=\{S\cap[e_i+1,r_i]: S\in\Pc_i\}$.
For arbitrary valuations with unlimited supply, discarding the right half is defined to
simulate the effect of pricing all edges in $[e_i+1,r_i]$ at 0 (discarding
the left half is symmetric). So in this case, we define the half-clique $\Hc_i$ to be 
$\{S\cup [e_i+1,r_i]: S\in\Pc_i\}$ (note that there are no capacity constraints). 

A key observation is that for both subadditive and arbitrary valuations, 
$\E{\opt(\Hc_i)}\geq\opt(\Pc_i)/2$ for every $i$, where $\opt(\Sc)$ denotes the optimum
profit when players may only be assigned intervals from $\Sc$. 


We now consider the problem of setting {\em interval prices} for the intervals in
$\bigcup_i\Hc_i$ that of course obey the constraint that $p(S)\leq p(T)$ if $S\sse T$. 
First, we discretize the space of interval prices. 
Let $\mathcal{B}$ be the maximum price any player may pay in a feasible solution. 
We consider only positive prices of the form 
$d_q=\mathcal{B}/2^{q},\ q\in\Z_{\geq 0}$ for $d_q \geq \mathcal{B}/mn$. 
We lose at most a factor of $2(1+\frac{1}{m})$ this way (since we have item-disjoint
half-cliques). Now we have $O(\log n+\log m)$ different prices.  
Let $\mathcal{R}_i$ denote the set of all possible solutions for $\Hc_i$, where a solution
specifies a pricing of the intervals in $\Hc_i$ (choosing non-zero prices from $\{d_q\}$ or $0$)
and an allocation of intervals to customers satisfying the budget and capacity constraints.
We introduce a variable $y_{jp}\geq 0$ for each customer $j$ and price $p$ denoting if 
customer $j$ buys a path at price $p$, and a variable $x_{i,R}$ for each 
$R \in \mathcal{R}_i$ 
denoting whether solution $R$ has been chosen for $\Hc_i$. 
Let $p_j(R)$ be the price that $j$ pays under the solution $R$, and
$\mathcal{R}_{i,j,p}=\bigl\{R\in\Hc_i: p_j(R)=p\bigr\}$
be the set of solutions for $\Hc_i$ where $j$ pays price $p$. 
We consider the following LP. Here $p$ indexes all the possible interval-prices.
\begin{alignat}{3}
\max & \quad & \sum_{j,p} p\cdot y_{jp} & \tag{P2} \label{primal_unlimited} \\
\text{s.t.} && \sum_{R \in \cR_i} x_{i,R} & = 1 \qquad && \frall i \notag \\
&& \sum_{p} y_{jp} & \leq 1 \qquad && \frall j \label{single_product} \\
&& \quad y_{jp} & \leq \sum_{i,R: R \in \mathcal{R}_{i,j,p}} x_{i,R} 
\qquad && \frall j, p \label{primal_constraints} \\
&& x_{i,R}, y_{jp} & \geq 0 \qquad && \frall i, R, j, p. \notag
\end{alignat}
Constraint \eqref{single_product} ensures that a customer only buys at at most one price, and 
constraint \eqref{primal_constraints} ensures that $j$ can only buy at price $p$ if a
solution $R \in\bigcup_i\cR_{i,j,p}$ has been selected.   
The arguments above establish that $\OPT_{\text{\eqref{primal_unlimited}}}$ is at
least $1/4\bigl(1+\frac{1}{m}\bigr)$-fraction of the optimum for the instance $\A$ 
(for both subadditive and arbitrary valuations).
We show that one can obtain an integer solution to \eqref{primal_unlimited} of objective
value at least $\OPT_{\text{\eqref{primal_unlimited}}}/4$; this will complete the proof.

\eqref{primal_unlimited} has an exponential number of variables, so to solve it we
consider the dual problem. The separation oracle for the dual amounts to solving a related
pricing problem where the half-cliques are now decoupled. We give a $2$-approximation
algorithm for this problem, which then yields a $2$-approximate dual solution, and hence,
a $2$-approximate solution to \eqref{primal_unlimited} (Lemma~\ref{lem:approx_solution}). 
Lemma~\ref{lem:random_rounding} states that this fractional solution can then be
rounded to an integer solution losing at most another factor of 2. 
This completes the proof. 
\end{proof}

\begin{lemma}
 \label{lem:approx_solution}
One can compute a $2$-approximate solution to \eqref{primal_unlimited} in
polynomial time. 
\end{lemma}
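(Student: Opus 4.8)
The plan is to solve \eqref{primal_unlimited} --- which has polynomially many constraints but exponentially many variables $x_{i,R}$ --- by running the ellipsoid method on its dual, driven by an \emph{approximate} separation oracle. First I would take the dual: with a free multiplier $a_i$ for each equality $\sum_{R\in\cR_i}x_{i,R}=1$, a multiplier $b_j\ge 0$ for \eqref{single_product}, and a multiplier $g_{jp}\ge 0$ for \eqref{primal_constraints}, the dual is to minimize $\sum_i a_i+\sum_j b_j$ subject to $b_j+g_{jp}\ge p$ for all $j,p$, and $a_i\ge\sum_j g_{j,p_j(R)}$ for every clique $i$ and every $R\in\cR_i$ (here $p_j(R)$ is the price paid by $j$ in the solution $R$, with $g_{j,0}:=0$), together with nonnegativity. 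The constraints $b_j+g_{jp}\ge p$ are polynomial in number and trivial to check, so all the work is in the second family, which however \emph{decouples over the cliques}: a point $(a,b,g)$ violates it precisely when $\max_{R\in\cR_i}\sum_j g_{j,p_j(R)}>a_i$ for some single clique $i$.

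So the core task is, for each clique $i$ in isolation, to (approximately) maximize $\sum_j g_{j,p_j(R)}$ over $R\in\cR_i$. Since every interval of the half-clique $\Hc_i$ contains the common edge $e_i$, these intervals are nested, so a solution $R\in\cR_i$ amounts to: a monotone pricing of the nested intervals (prices non-decreasing as the interval grows) using the $O(\log m+\log n)$ discretized price levels already set up before Lemma~\ref{lem:approx_solution}, together with an allocation of intervals to customers respecting each customer's budget $p(S)\le v_j(S)$ and the (nested) capacity constraints, where serving $j$ at the price of her assigned interval earns reward $g_{j,\cdot}\ge 0$. I would compute a $2$-approximately optimal such $R$ by dynamic programming: sweep the nested intervals from innermost to outermost, with a DP state recording the current interval, the price assigned to it on the ladder, and the cumulative number of customers allocated so far (which, because the intervals are linearly ordered by inclusion, is all that is needed to enforce the capacity constraints); each transition picks the next, no-smaller ladder price and serves, among the as-yet-unallocated customers who can afford that price on the current interval, those with largest reward. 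A per-level exchange argument --- bounding the greedy choice against the optimal (fractional) assignment at that level, in the spirit of the price-ladder PTAS of~\cite{AFMZ04} --- shows the resulting value is at least half the clique optimum $\max_{R\in\cR_i}\sum_j g_{j,p_j(R)}$.

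Combining the exact check of $b_j+g_{jp}\ge p$ with the above DP run once per clique gives a polytime oracle that, on input $(a,b,g)$, either outputs a genuinely violated dual constraint $x_{i,\hat R}$ (when the DP value for some $i$ exceeds $a_i$) or certifies that $(a,b,g)$ satisfies every dual constraint after scaling by $\tfrac12$ --- i.e., a $2$-approximate separation oracle for the dual. Feeding this into the standard ellipsoid-based reduction (see, e.g.,~\cite{CV02}) certifies the dual optimum up to a factor of $2$ and, more importantly, produces a polynomial-size set of primal variables $x_{i,R}$; solving \eqref{primal_unlimited} restricted to those variables (a polynomial-size LP) exactly then yields a feasible primal solution of objective value at least $\OPT_{\text{\eqref{primal_unlimited}}}/2$, which then gets rounded further by Lemma~\ref{lem:random_rounding}. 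I expect the DP of the previous paragraph to be the main obstacle: one must simultaneously handle the monotone price ladder, each customer's interval-dependent budgets, and the cascading (suffix-sum) capacity constraints while keeping the state space polynomial and still guaranteeing the clean factor of $2$; the rest is either routine bookkeeping or a black-box invocation of the ellipsoid framework.
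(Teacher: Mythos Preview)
Your high-level framework --- take the dual, run ellipsoid with an approximate separation oracle that decouples over cliques, then recover a $2$-approximate primal --- is exactly the paper's route. The gap is in the per-clique separation oracle itself, which is indeed where all the content lies, and your sketch does not supply the two ideas the paper uses to make the factor $2$ go through.

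First, the paper does \emph{not} try to $2$-approximate $\max_{R\in\cR_i}\sum_{j}g_{j,p_j(R)}$ for the arbitrary nonnegative $g_{jp}$ handed down by the ellipsoid. Instead it observes that, since the polynomially-many constraints $b_j+g_{jp}\ge p$ and $g_{jp}\ge 0$ may be assumed satisfied, one can replace $g$ by $\tilde g_{jp}:=\max\{0,\,p-b_j\}\le g_{jp}$: any clique constraint violated under $\tilde g$ is also violated under $g$, and $\tilde g$ has the ``voucher'' structure that a customer served at price $p$ contributes exactly $\max\{0,\,p-b_j\}$. Your proposal works directly with the raw $g$, where the reward of a customer need not be monotone in the price she pays; the exchange argument you allude to does not obviously survive this.

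Second, with the voucher structure in hand, the paper's factor $2$ comes from a \emph{relaxation}, not from greedy: drop the ``one interval per customer'' constraint and only forbid a customer from buying two intervals at the \emph{same} price. Because prices are geometric, a customer who buys at prices $p,p/2,p/4,\ldots$ contributes at most $\sum_{q\ge 0}\max\{p\cdot 2^{-q}-b_j,0\}\le 2\max\{p-b_j,0\}$, so the relaxation costs a factor $2$. Crucially, the relaxed problem decomposes across price levels, which is exactly what lets the DP carry only the \emph{count} $U$ of customers served so far rather than their identities (and solve an interval-packing LP at each level). Your DP keeps the single-interval constraint and speaks of ``as-yet-unallocated customers'' while storing only a cumulative count; those two things are incompatible --- without the relaxation the future value genuinely depends on \emph{which} customers were served, and the state space blows up. The ``per-level exchange argument'' you gesture at would have to repair this, but nothing in the sketch indicates how.

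In short: keep your dual/ellipsoid scaffolding, but replace your DP plan with the paper's two moves --- the $\tilde g_{jp}=\max\{0,p-b_j\}$ substitution, and the multi-purchase relaxation whose cost is bounded by the geometric price ladder --- after which the relaxed per-clique problem is solvable \emph{exactly} by a polynomial DP.
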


\begin{proof} 
We first show how to get a $2$-approximate solution for the dual
problem. This will also yield a method to get a $2$-approximate primal solution. Consider 
the dual program: 
\begin{alignat}{3}
\min & \quad & \sum_{i} \alpha_i & + \sum_{j} \beta_j \label{dual_unlimited} \tag{D2}\\
\text{s.t.} && \beta_j + \gamma_{jp} & \geq p  \qquad && \text{for all $j$, price $p$} 
\label{dual_ineq1} \\
&& \sum_{j,p: R \in \cR_{i,j,p}} \gamma_{jp} & \leq \alpha_i \qquad && 
\text{for all $i, R \in \cR_i$} \label{clique_constraints} \\
&& \beta_j, \gamma_{jp} & \geq 0 \qquad && \text{for all $j$, price $p$}.
\label{dual_nonnegative} 
\end{alignat}
Note that \eqref{dual_unlimited} has an exponential number of constraints. In order to
%
%
solve \eqref{dual_unlimited} efficiently, we use the ellipsoid method, which reduces the
problem of solving the LP to finding a separation oracle that, given a candidate solution
vector $v = (\alpha,\beta,\gamma)$, either produces a feasible solution (usually the input
$v$) or returns a constraint violated by $v$.  
Constraints \eqref{dual_ineq1} and \eqref{dual_nonnegative} can easily be checked
in polynomial time. In the following, we will therefore assume that $v$
satisfies these constraints. On the other hand, there are an exponential number of
constraints \eqref{clique_constraints}. It turns out that checking whether one of these
constraints is violated amounts to solving a generalized non-single-minded pricing
instance: we seek to find a solution $R \in \cR_i$, such that 
$\sum_{j,p: R \in \cR_{i,j,p}} \gamma_{jp}$ is maximized, i.e., a customer  $j$ who is
allotted an interval priced at $p$ contributes $\gamma_{jp}$ to the objective value.  
If the maximum achievable profit on some half-clique $\Hc_i$ is larger than $\alpha_i$, then
the corresponding solution yields a violated constraint. Otherwise all constraints are
satisfied. Although it seems that we have not gained much by reducing the original
non-single-minded pricing problem to another even more generalized pricing problem, the
crucial point here is that we have \emph{removed the dependencies} between different
(half) cliques. 
Instead of solving this problem with the given $\gm$, we move to a slightly more
structured $\tilde{v} = (\alpha,\beta,\tilde{\gamma})$ to be specified shortly, such that if
$\tilde{v}$ violates a constraint, then the same constraint is also violated by $v$. Note
also that $\tilde{v}$ and $v$ have the same objective function value since they share the
same $\alpha$ and $\beta$ values. 
We define $\tilde{\gamma}_{jp} := \max\{0, p-\beta_j\}$. Note that $\gamma \geq
\tilde{\gamma}$ since we have assumed that $v$ satisfies the constraints
\eqref{dual_ineq1} and \eqref{dual_nonnegative}. Hence, if $\tilde{v}$ violates one of the
constraints \eqref{clique_constraints}, $v$ also violates it. 

We give a $2$-approximation algorithm for this new pricing problem on a half-clique,
which we call the {\em voucher-pricing problem}. (The rationale is that $\beta_j$ can be
viewed as a voucher that customer $j$ can redeem and thereby decrease her price).
We describe the algorithm shortly, but first we show that this implies the lemma. 
Using this approximation algorithm for the separation oracle yields a dual solution 
$(\alpha, \beta, \gamma)$ that is potentially not feasible. In particular, it could violate the
constraints \eqref{clique_constraints}, however only by a factor of at most $2$. By
scaling $\alpha$ accordingly, we get a feasible solution $(2\alpha, \beta, \gamma)$ whose
objective function value is at most $2\cdot\OPT_{\text{\eqref{dual_unlimited}}}$.  

Now applying an argument similar to the one used by Jain et al.~\cite{JainMS03} shows that  
one can also compute a $2$-approximate primal solution.
\end{proof}  

\vspace{-1ex}
\paragraph{A \boldmath $2$-approximation algorithm for the voucher-pricing problem on a
half-clique $\Hc_i$.} 
The algorithm follows the dynamic-programming approach by Aggarwal et
al.~\cite{AFMZ04}. The main observation is that if we relax the constraint that a customer
buys (i.e., is assigned) at most one interval and only prevent her from buying two
intervals at the same  price then we lose at most a factor of $2$. To see
this, note that since we have discretized our search space of prices, if $p$ is the
maximum price paid by a customer $j$ in a solution to the relaxed problem (where she can
buy multiple intervals), then $j$'s contribution to the profit is at most
\begin{align*}
\sum_{q \geq 0} \max\{p\cdot 2^{-q}-\beta_j,0\} & 
\leq \sum_{q \geq 0} 2^{-q} \cdot\max\{p-\beta_j,0\} && \text{since $\beta_j \geq 0$} \\ 
& \leq 2\cdot\max\{p-\beta_j,0\}
\end{align*}
and if we assign $j$ only the single interval at price $p$ (note that this still satisfies
the capacity constraints), we get profit $\max\{p-\beta_j,0\}$.

We solve this relaxed problem using dynamic programming. 
To keep notation simple, let $T_1\supseteq T_2\supseteq\ldots\supseteq T_\ell$ denote the
intervals in $\Hc_i$. 
So we require that $p(T_1)\geq\ldots\geq p(T_\ell)$. 
Let $d_Q$ be the lowest non-zero price in our discrete price-space, and let $d_{Q+1}:=0$. 

Let $F(q,i,U)$ denote the value of an optimal solution when customers are only assigned
intervals from from $\{T_i,\ldots,T_\ell\}$, the prices of these intervals lie in
$\{d_q,\ldots,d_Q,d_{Q+1}\}$, and 
$U$ customers have been assigned intervals from $\{T_1,\ldots,T_{i-1}\}$ (recall that a
customer may be assigned multiple intervals if they are priced differently).
Clearly $F(0,1,0)$ is the optimal value we are looking for.
The base cases are easy: we set $F(Q+1,i,U)=0$ for all $i,\ U$.
For $k\geq i$, let $C(i,k,q)$ denote the set of customers who can afford to buy an
interval in $\{T_i,\ldots,T_k\}$ priced at $d_q$. 
Set $C(i,k,q) = \emptyset$ for $k<i$.   

Suppose we decide to set the price of intervals $T_i,\ldots,T_k$ to $p=d_q$ and assign $t$
customers to these intervals. Then, the best value that one can earn from intervals
$\{T_{k+1},\ldots,T_\ell\}$ is $F(q+1,k+1,U+t)$. Notice that this {\em does not} depend on
which $t$ customers are assigned intervals in $T_i,\ldots,T_k$ or how these customers are
allotted these intervals. Thus, we can compute the optimum assignment of intervals in
$T_i,\ldots,T_k$ to $t$ customers separately. This is an interval packing problem
which one can solve efficiently. 
For $j\in C(i,k,q)$, let $i_j$ be the largest index $i'\in\{i,\ldots,k\}$ such that $j$
can afford to buy $T_{i'}$ at price $p=d_q$. Notice that we may assume that in 
an optimal solution to this interval packing solution, if $j$ is assigned an interval,
it is assigned $T_{i_j}$.   
Now we can formulate the following integer program for solving this interval packing
problem. For each $j\in C(i,k,q)$, let $Z_j$ be an indicator variable that denotes if
$j$ is assigned interval $T_{i_j}$. 
Then, we want to solve the following integer program.
$$ \max \ \sum_{j\in C(i,k,q)}\!\!\!\!Z_j\max\{d_q-\beta_j,0\} \quad
\text{s.t.}\quad \sum_j Z_j=t, \quad \sum_{j:e\in T_{i_j}}Z_j\leq c_e-U \ \forall e, 
\quad Z_j\in\{0,1\}\ \forall j. $$
It is well known that an interval packing problem can be solved efficiently (e.g., by
finding an optimal solution to its LP-relaxation). Let $P(i,k,q,t)$ be the optimal value
of the above program, and $-\infty$ if the program is infeasible.
Then we have the following recurrence.
$$
F(q,i,U) = \max \Bigl\{F(q+1,k+1,U+t)+P(i,k,q,t): \quad
i-1\leq k\leq\ell,\ 0\leq t\leq n\Bigr\}.
$$
We need to compute $O\bigl((\log m+\log n)\ell n\bigr)$ table entries for
$F(\cdot,\cdot,\cdot)$ to get the optimal value, and so this DP can be implemented
in polynomial time.  
Note that we can easily record the corresponding solution along with the computation of 
each $F(q,i,U)$.   

\begin{lemma}
\label{lem:random_rounding}
One can round any solution $(x,y)$ to \eqref{primal_unlimited} to an integer solution of
objective value at least $(1-e^{-1})\sum_j py_{jp}$.
\end{lemma}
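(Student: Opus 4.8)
The plan is to use independent randomized rounding on the fractional solution $(x,y)$, one $x$-variable per clique, and then account for the loss incurred when a customer's chosen price is ``blocked'' because the solution selected for the clique covering her does not actually charge her that price. First I would fix, for each clique-index $i$, a random solution $R_i \in \cR_i$ by picking $R \in \cR_i$ with probability $x_{i,R}$; these choices are made independently across $i$ (this is well-defined since $\sum_{R\in\cR_i} x_{i,R}=1$). Having fixed $\{R_i\}_i$, I would then tentatively let each customer $j$ buy the interval at the price assigned to her by whichever selected solution charges her — but a customer may be ``claimed'' by several cliques at different prices, and we must pick just one; I would have $j$ keep the \emph{highest} such price among the selected solutions that charge her (and keep nothing if no selected solution charges her). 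The allocation restricted to any single clique $\Hc_i$ comes from a single $R_i \in \cR_i$, so the budget and capacity constraints within that clique are automatically respected; since the cliques are item-disjoint, capacity constraints globally hold, and each customer buys at most one interval, so the result is a genuine integer feasible solution to \eqref{primal_unlimited}.

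The key step is the profit estimate. Fix a customer $j$ and a price $p$ with $y_{jp}>0$. By constraint \eqref{primal_constraints}, the total $x$-mass of solutions (over all cliques $i$) that charge $j$ exactly $p$ is at least $y_{jp}$. I would argue that $j$ ends up buying \emph{something} at price $\ge p$ unless, for every clique $i$, the solution $R_i$ drawn for $\Hc_i$ does \emph{not} charge $j$ a price $\ge p$. Because the cliques are item-disjoint a customer is charged a price by $R\in\cR_i$ for at most\dots well, more carefully: let $q_{i} := \sum_{R \in \cR_i : p_j(R) \ge p} x_{i,R}$ be the probability that the solution chosen for clique $i$ charges $j$ a price at least $p$. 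The events ``$R_i$ charges $j$ a price $\ge p$'' are independent across $i$, so the probability $j$ buys at price $\ge p$ is $1 - \prod_i (1 - q_i) \ge 1 - \exp(-\sum_i q_i) \ge 1 - \exp(-y_{jp})$, using $\sum_i q_i \ge \sum_i \sum_{R\in\cR_{i,j,p}} x_{i,R} \ge y_{jp}$ from \eqref{primal_constraints}. Since the realized price is at least $p$ whenever this event occurs, the expected revenue collected from $j$ is at least $p\,(1-e^{-y_{jp}})$. Summing over the (at most one relevant, by \eqref{single_product}) nonzero $y_{jp}$ and using $1-e^{-a}\ge (1-e^{-1})a$ for $a\in[0,1]$, the expected total profit is at least $(1-e^{-1})\sum_{j,p} p\,y_{jp}$.

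The main obstacle I anticipate is making the ``keep the highest price'' bookkeeping rigorous so that the lower bound on $j$'s contribution is genuinely $p(1-e^{-y_{jp}})$ for the \emph{specific} $p$ in question and not something smaller: a priori $j$ might be charged $p$ by clique $i$ but a higher price $p' > p$ by clique $i'$, in which case she contributes $p' \ge p$, which is fine; the subtlety is the opposite — one must check that selecting the \emph{highest} offered price (rather than, say, a fixed one) never decreases $j$'s revenue below the bound for any chosen target $p$, which is immediate since revenue is monotone in the kept price. A second minor point is that $y_{jp}\le 1$ for the concavity bound $1-e^{-a}\ge(1-e^{-1})a$, which follows from \eqref{single_product} (in fact $\sum_p y_{jp}\le 1$). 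Finally, I would remark that the rounding is easily derandomized by the method of conditional expectations over the clique choices $R_i$, since all quantities above are explicit functions of the drawn solutions.
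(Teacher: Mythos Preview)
Your rounding procedure (sample $R_i\in\cR_i$ independently with probability $x_{i,R}$, then let each customer keep the highest offered price) is exactly what the paper does, and your feasibility argument is fine. The gap is in the revenue bound: you derive, for each pair $(j,p)$, that the expected revenue from customer $j$ is at least $p\bigl(1-e^{-y_{jp}}\bigr)$, and then write ``summing over the (at most one relevant, by \eqref{single_product}) nonzero $y_{jp}$''. But \eqref{single_product} only says $\sum_p y_{jp}\le 1$; it does \emph{not} say that at most one $y_{jp}$ is nonzero, and in a fractional LP solution several prices will typically carry mass. So for a fixed $j$ you have produced several lower bounds on the \emph{same} scalar $\E{\text{revenue from }j}$, one per price level, and these cannot be added.

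The repair is short but not automatic. One route is a tail-sum argument: your own computation actually gives the stronger inequality $\Pr[\max_i p_j(R_i)\ge p]\ge 1-\exp\bigl(-\sum_{p'\ge p} y_{jp'}\bigr)$, since the events $\{p_j(R_i)\ge p\}$ over $i$ are independent and $\sum_i\sum_{R:\,p_j(R)\ge p} x_{i,R}\ge\sum_{p'\ge p} y_{jp'}$. Writing $\E{\text{revenue from }j}=\sum_{q}(d_q-d_{q+1})\Pr[\max_i p_j(R_i)\ge d_q]$ with prices listed in decreasing order and $d_{Q+1}:=0$, and using $1-e^{-A}\ge(1-e^{-1})A$ for $A:=\sum_{q'\le q} y_{j,d_{q'}}\le 1$, a telescoping sum yields $(1-e^{-1})\sum_p p\,y_{jp}$. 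The paper takes a different but equivalent route: it first decomposes each $y_{jp}$ as $\sum_{i,R\in\cR_{i,j,p}} y_{ij,R}$ with $y_{ij,R}\in\{0,x_{i,R}\}$ (cloning solutions if needed), defines a per-clique average $Z_{ij}$ and hit probability $\theta_i$, and applies the standard ``nested'' bound $\theta_1 Z_{1j}+(1-\theta_1)\theta_2 Z_{2j}+\cdots\ge(1-e^{-1})\sum_i \theta_i Z_{ij}$ from~\cite{DS06,AFMZ04}. Either analysis closes the gap; as written, yours does not.
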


\begin{proof} 
Let $(x,y)$ denote a feasible solution to \eqref{primal_unlimited}. 
Recall that $p_{j}(R)$ is the price $j$ pays in the solution $R$. 
For each $j$ and each solution $R\in\bigcup_i\Rc_i$ we can define values $y_{ij,R}$ such
that $y_{ij,R}\leq x_{i,R}$ for all $i,j,R\in\Rc_i$, and 
$y_{jp}=\sum_{i,R\in\Rc_{i,j,p}}y_{ij,R}$ for all $j,p$. By making ``clones'' of a solution
$R\in\Rc_i$ if necessary, we can ensure that $y_{ij,R}$ is either 0 or $x_{i,R}$ for every
$R\in\Rc_i$. 
The rounding is simple: independently, for each half-clique $\Hc_i$, we choose solution 
$R\in\mathcal{R}_i$ with probability $x_{i,R}$. 
Let $Q_i$ denote the solution selected for $\Hc_i$. 
Now we assign each customer $j$ to the $\Hc_i$ with maximum $p_{j}(Q_i)$. Notice
that this yields a feasible solution for the instance composed of the union of the (half)
cliques. The analysis is quite similar to the analysis in~\cite{DS06} and~\cite{AFMZ04}; we
reproduce it here for completeness.

Fix a customer $j$.
Let $\tht_i=\Pr[y_{ij,Q_i}>0]=\sum_{R\in\Rc_i}y_{ij,R}$.
Let $Z_{ij}=\bigl(\sum_{R\in\Rc_i}y_{ij,R}p_{j}(R)\bigr)/\bigl(\sum_{R\in\Rc_i}y_{ij,R}\bigr)$. 
Note that $\sum_{i}Z_{ij}\tht_i=\sum_{i}\sum_{R\in\Rc_i}y_{ij,R}p_{j}(R)=\sum_p py_{jp}$. 
Consider the sub-optimal way of assigning $j$ to a (random) $R$, where we assign $j$ to
the $\Hc_i$ with maximum $Z_{ij}$ for which $y_{ij,Q_i}>0$; if there is no such $i$ then
$j$ is unassigned. Let $k$ be the number of half-cliques, and let these be ordered so that 
$Z_{1j}\geq Z_{2j}\geq\ldots Z_{kj}$. 
The expected price that $j$ pays under this suboptimal assignment is
$$
\tht_1 Z_{1j}+(1-\tht_1)\tht_2 Z_{2j}+\ldots+(1-\tht_1)\cdots(1-\tht_{k-1})\tht_k Z_{kj}
$$
which following the analysis in~\cite{DS06} (for example) is at least
$\bigl(1-\frac{1}{e}\bigr)\sum_i Z_{ij}\tht_i=\bigl(1-\frac{1}{e}\bigr)\sum_{j,p} py_{jp}$.
Thus, the expected profit obtained is at least $(1-e^{-1})\sum_p py_{jp}$.
The algorithm can be derandomized using a simple pipage rounding argument.
\end{proof}

\begin{proofof}{Corollary~\ref{c1}}
As we mentioned in the introduction, the Max-Buy multi-product pricing problem can be viewed as a non-SM highway problem, where there are $m$ disjoint edges and each bidder has values only on theses edges. As we did in the proof of Lemma~\ref{lem:clique-pricing}, we write an LP of the form~\eqref{primal_unlimited}, except that we do not have to round the prices (note that the number of relevant positive prices on each edge is at most the number of customer valuations on that edge).  Then following the approach we used in the proof of Lemma~\ref{lem:random_rounding}, we can solve the dual problem~\eqref{dual_unlimited} in polynomial time since a separation oracle reduces to the voucher pricing problem on a {\it single} item which can be trivially solved in polynomial time. 
  
The claim then follows immediately by combining this with the rounding procedure of Lemma~\ref{lem:random_rounding}. 
\end{proofof}

\noindent
{\large\bf Acknowledgment\ }We thank Markus Bl\"{a}ser for his valuable comments.

\vspace{-10pt}



\appendix




\section{Proof of Theorem \ref{thm:envyfree_pricing_on_trees}} \label{append-apps}
We prove that the rounding algorithm described in Section~\ref{apps}, 
``Non-single minded tollbooth problem on trees'' is an $O(1)$-integrality-gap verifying
algorithm for the new LP relaxation. 
Recall that this new LP-relaxation for the SWM problem is derived from \eqref{ca-p} by
retaining only variables $x_{j,S}$ where $S$ is a path of the tree. 
More specifically, we prove that with $\al=0.01$ (as defined in the algorithm), the
algorithm returns a random integer solution $\hx$ such that 
$\Pr[\hx_{j,S}=1]\geq 0.00425 x^*_{j,S}$. This implies that
$\E{\sum_{j,S\in\Pc}v_j(S)\hx_{j,S}}\geq 0.00425\cdot\sum_{j,S\in\Pc}v_j(S)x^*_{j,S}$. (Recall
that $\Pc$ is the collection of all paths of the tree.)  
Denote by "$\leq$" the ordering defined in step 2 of the rounding procedure. This ordering
has the following useful property. 

\begin{fact}
\label{ancestor-fact}
 If two sets $A,~B\in\Pc$ with $A \leq B$ share a common edge $e$, then they also share the
 path from $e$ up to the highest edge in $B$, i.e., the edge of $B$ that is closest to the
 root.  
\end{fact}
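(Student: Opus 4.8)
The plan is to unpack the definitions of the depth of an edge, the depth of a path, and the ordering ``$\leq$'' (sets are ordered by non-decreasing depth), and then argue purely structurally about paths in a rooted tree. First I would observe that a path $B$ in a rooted tree has a unique \emph{highest} edge $e_B$, namely the edge of $B$ of minimum depth; equivalently, $B$ has a unique vertex closest to the root (its ``apex''), and $B$ consists of two descending sub-paths hanging off that apex (one of which may be empty). The depth of $B$ is then exactly $\level(e_B)$. The key elementary fact about trees I would invoke is: for any edge $f$ on $B$, the unique path from $f$ to $e_B$ along $B$ coincides with the unique path from $f$ ``up towards the root'' until it reaches the apex of $B$ — because $B$ is itself a path, the portion of $B$ between any of its edges and its highest edge is monotonically ascending.

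Next, suppose $A \leq B$ and $A$, $B$ share a common edge $e$. From $A \leq B$ we get $\level(A) \le \level(B)$, i.e. the minimum depth of an edge of $A$ is at most the minimum depth of an edge of $B$; in particular $\level(e_A) \le \level(e_B)$. Now consider the edge $e \in A \cap B$. Along $B$, walk from $e$ up towards the root; since $B$ is a path this walk stays inside $B$ and terminates at the apex of $B$, traversing exactly the sub-path of $B$ from $e$ to $e_B$. I claim every edge on this sub-path also lies in $A$. Indeed, $A$ is a path containing $e$, so $A$ also has a well-defined ascending portion from $e$ up to $A$'s apex; this ascending portion of $A$ and the ascending portion of $B$ from $e$ both start at $e$ and both go strictly towards the root, hence they agree edge-by-edge for as long as both exist. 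They can only diverge when one of them ``runs out'' — i.e. when one of the two paths reaches its apex. Since $\level(e_A) \le \level(e_B)$, the apex of $A$ is at least as close to the root as the apex of $B$ (an edge of smaller depth sits closer to the root), so $B$'s ascending part runs out first (or simultaneously). Therefore the entire sub-path of $B$ from $e$ to $e_B$ is contained in $A$, which is exactly the assertion of the Fact.

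The main obstacle — really the only subtle point — is making the comparison ``apex of $A$ is at least as high as apex of $B$'' rigorous from the depth inequality, and handling the tie-breaking in the ordering ``$\leq$'' correctly (two sets of equal depth may be ordered arbitrarily, but the argument above only uses $\level(A)\le\level(B)$, not strictness, so ties cause no trouble). One should also be a little careful that ``highest edge of $B$'' is well-defined, i.e. that the minimum-depth edge of $B$ is unique: this holds because the two edges of $B$ incident to its apex vertex have depth $\level(\text{apex})$ and $\level(\text{apex})+1$ respectively when the apex is interior to $B$, and when the apex is an endpoint there is only one incident edge; in all cases the minimizer is unique. With these observations the proof is a short structural argument, and I would present it essentially as the two paragraphs above, perhaps with a sentence isolating the ``ascending portions of two paths through a common edge agree until one terminates'' lemma.
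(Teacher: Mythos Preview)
The paper states this Fact without proof, so there is no argument to compare against; your plan is essentially the right way to justify it and would be accepted as the intended elementary structural argument.

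One small correction: your side remark that ``the two edges of $B$ incident to its apex vertex have depth $\level(\text{apex})$ and $\level(\text{apex})+1$'' is not right. If the apex is interior to $B$, then both edges of $B$ incident to the apex go to children of the apex (otherwise the parent would be in $B$ and closer to the root, contradicting that the apex is highest), so both have depth equal to $\level(\text{apex})$. Thus the minimum-depth edge of $B$ need \emph{not} be unique. Fortunately your main argument never uses this uniqueness: the edge $e$ lies on one branch of $B$ below the apex, and the ``path from $e$ up to the highest edge in $B$'' is simply the ascending sub-path of $B$ from $e$ to the apex, which picks out exactly one of the two candidate top edges. Your core reasoning---that the unique root-ward path from $e$ in the tree is followed by both $A$ and $B$ until each reaches its apex, and that $\level(A)\le\level(B)$ forces $A$'s apex to be an ancestor of (or equal to) $B$'s apex, so $B$ runs out first---is correct and is all that is needed. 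I would simply drop the uniqueness paragraph, or replace it with the observation that the relevant top edge is determined by which branch of $B$ contains $e$.
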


Let $X_{j,S}$ and $Y_{j,S}$ for $j \in [n], S \in \cP$ to be two $\{0,1\}$-random
variables defined as follows: $X_{j,S}:= 1$ if and only if $S=S_j$ was assigned to
customer $j$ in step 1 of the procedure and and $Y_{j,S}=1$ if and only if $S$ survives in
Step 2, that is, $S_j=S$ and $j\in W$.  Note that, for $S\neq\emptyset$,
$\Pr[X_{j,S}=1]=\alpha x^*_{j,S}$.   

Now $\hx$ is defined by $\hx_{j,S_j}=1$ and $\hx_{j,S}=0$ for all $S\neq S_j$,
$j\in W$. 
We have  
\begin{align*}
\Pro{Y_{j,S} = 1} &= \Pro{X_{j,S} = 1} \cdot \Pro{Y_{j,S}  = 1 \mid X_{j,S} = 1} \\
		&= \alpha x^*_{j,S} \cdot \Pro{Y_{j,S}  = 1 \mid X_{j,S} = 1}\\
		& = \alpha x^*_{j,S} \cdot \bigl(1-\Pro{Y_{j,S}  = 0 \mid X_{j,S} = 1}\bigr).
\end{align*} 
We will show that $\Pro{Y_{j,S}  = 0 \mid X_{j,S} = 1}$, the probability of rejecting $S$
in step 2, is bounded from above by a constant. 

Recall that path $S$ is rejected in step 2 if its inclusion violates a capacity constraint at
some of its edges. It is natural now to apply a simple union bound on these
events. Unfortunately, this bound turns out to be too weak to prove a constant rejection
probability. Instead, we only consider a small subset $S' \subseteq S$ and show that the
rejection probability is bounded from above by the probability that some capacity
constraint on $S'$ is violated by the sets chosen in step 1. Let $v$ be the node in $S$
closest to the root.  
We consider the two branches of $S$ that are split by $v$, separately. Let $\ell =
(\ell_1, \ell_2,\dots)$ denote one branch and $r = (r_1, r_2, \dots)$ the other one, where
$\ell_1$ and $r_1$ denote the edges of $S$ incident to $v$. Note that $r$ or $\ell$ could
be empty if the path only consists of a single branch. The edges of $S'$ along the first
branch are now defined recursively: $\ell'_1 := \ell_1$ and $\ell'_i = \ell_j$ where $j =
\min\{k \mid c_{\ell_k} \leq c_{\ell'_{i-1}}/2\}$, i.e., $\ell'_i$ is the first edge after
$\ell'_{i-1}$ along the branch with less than half the capacity of
$\ell'_{i-1}$. Similarly, we define the edges along the second branch: $r'_1 := r_1$ and
$r'_i = r_j$ such that $j = \min\{k \mid c_{r_k} \leq c_{r'_{i-1}}/2\}$. 
So $S'=\bigl(\bigcup_i \ell'_i\bigr) \cup \bigl(\bigcup_i r'_i\bigr)$.  

A \textit{bad} event $\cE_e$ at edge $e$ occurs when $\sum_{j,A : e\in A} X_{j,A} \geq
c_{e}/2$. The next lemma shows that it is sufficient to only consider bad events at the
previously selected edges in $S'$.  

\begin{lemma}
\label{lem:upperbound_noninclusion}
 For every customer $j$, we have 
\begin{equation*}Pr[Y_{j,S} = 0 \mid X_{j,S} = 1] \leq \sum_{e \in S'} Pr[\cE_e].\end{equation*}
\end{lemma}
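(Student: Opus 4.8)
The plan is to establish the stronger event inclusion
\[
\{Y_{j,S}=0\}\cap\{X_{j,S}=1\}\ \subseteq\ \bigcup_{e'\in S'}\cE'_{e'},
\]
where $\cE'_{e'}$ is the ``bad event at $e'$ restricted to customers other than $j$'', i.e.\ $\sum_{j'\neq j,\,A\ni e'}X_{j',A}\geq c_{e'}/2$. This will suffice, since each $\cE'_{e'}$ is determined by the (mutually independent) choices made by the customers $\neq j$ in step~1, hence is independent of $X_{j,S}$; dividing by $\Pr[X_{j,S}=1]$ and applying a union bound then gives
\[
\Pr[Y_{j,S}=0\mid X_{j,S}=1]\ \leq\ \sum_{e'\in S'}\Pr[\cE'_{e'}]\ \leq\ \sum_{e'\in S'}\Pr[\cE_{e'}].
\]

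To get the inclusion, I would condition on $X_{j,S}=1$ and assume $S=S_j$ is discarded in step~2. Because the sets $\{S_i\}$ are considered in non-decreasing order of depth and $j\notin W$ at the moment $S_j$ is considered, discarding means some edge $e\in S$ is already ``full'': at least $c_e$ sets $A$, each owned by a customer $j'\neq j$ and considered before $S$, satisfy $e\in A$. Each such $A$ then obeys $A\leq S$ in the ordering and meets $S$ at the edge $e$, so Fact~\ref{ancestor-fact} forces $A$ to contain the entire portion of $S$ running from $e$ up to the highest edge of $S$. Writing $v$ for the node of $S$ nearest the root, this says: if $e$ lies on the branch $\ell=(\ell_1,\ell_2,\dots)$ out of $v$, then $\{\ell_1,\dots,e\}\subseteq A$ (the branch $r$ is symmetric).

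It remains to match $e$ with a pre-selected edge of $S'$ of comparable capacity. Let $\ell'_k$ be the deepest edge of $S'\cap\ell$ lying weakly above $e$ along the branch; it exists since $\ell'_1=\ell_1$ qualifies, and $\ell'_k\in\{\ell_1,\dots,e\}$, so each of the $\geq c_e$ blocking sets contains $\ell'_k$. By the recursive definition of $S'$, either $\ell'_{k+1}$ does not exist or it is the first edge after $\ell'_k$ with capacity $\leq c_{\ell'_k}/2$; since $e$ sits weakly above $\ell'_{k+1}$ (and $e=\ell'_k$ is trivial), we get $c_e\geq c_{\ell'_k}/2$. Hence $\sum_{j'\neq j,\,A\ni\ell'_k}X_{j',A}\geq c_e\geq c_{\ell'_k}/2$, i.e.\ $\cE'_{\ell'_k}$ holds with $\ell'_k\in S'$, which is exactly the inclusion.

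I expect the geometric step to be the main obstacle: using Fact~\ref{ancestor-fact} to argue that \emph{every} set that blocks $S$ at $e$ must already cover the sparse ``checkpoint'' edge $\ell'_k\in S'$ (rather than merely covering $e$), so that blocking at $e$ triggers a bad event at a fixed, pre-chosen edge. The naive union bound over all edges of $S$ is too weak, and the halving in the definition of $S'$ is precisely what keeps the checkpoint set small while still guaranteeing $c_{\ell'_k}\leq 2c_e$; I would also double-check the boundary cases where $e$ is incident to $v$ (then $\ell'_k=\ell_1$) and where $S$ has only one branch.
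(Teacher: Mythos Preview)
Your proposal is correct and follows essentially the same route as the paper: locate the blocking edge $e$, pass to its nearest ``checkpoint'' ancestor $e'\in S'$, use Fact~\ref{ancestor-fact} to show every blocking set contains $e'$, use the halving rule to get $c_e\geq c_{e'}/2$, and finish with a union bound. Your treatment is in fact slightly more careful than the paper's: by working with the $j$-excluded events $\cE'_{e'}$ and invoking independence across customers, you handle the conditioning on $X_{j,S}=1$ explicitly, whereas the paper's chain $\sum_{j,A:e'\in A}X_{j,A}\geq c_{e'}/2\Rightarrow\cE_{e'}$ tacitly includes customer~$j$'s own contribution and then drops the conditioning without comment.
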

\begin{proof}
Assume that $S$ was rejected in step 2. In that case, there has to be an edge $e \in S$
such that $\sum_{j,A : e\in A} Y_{j,A} = c_e$, i.e., the number of paths picked prior to
$S$ that contain $e$ equals $c_e$ and therefore the inclusion of $S$ would violate the
capacity constraint on $e$. Let $e'$ be the next ancestor of $e$ that is in $S'$ (an edge
is also an ancestor of itself). Since the highest edge along each branch of $S$ was
included in $S'$, such an edge $e'$ has to exist. Moreover, $c_e \geq c_{e'}/2$: if $e \in
S'$, then $e=e'$; otherwise by definition, we have $c_e \geq c_{e'}/2$. Now by Fact
\ref{ancestor-fact}, every set $A$ that was considered in step 2 before $S$ also contains
$e'$.  Hence, we have 
\begin{align}
\sum_{j,A : e'\in A} X_{j,A} &\geq  \sum_{j,A : e'\in A} Y_{j,A}\geq \sum_{j,A : e\in A} Y_{j,A} = c_e \geq c_{e'}/2. \notag
\end{align}  

Thus the bad event $\cE_{e'}$ occurs at edge $e'$. The result then follows from a simple union bound.
\end{proof}

It remains to bound the probability of a bad event. 

\begin{lemma}
\label{lem:upperbound_badevents}
For $\alpha = 0.01$, we have $\sum_{e \in S'} \Pro{\cE_e} \leq 0.575$.
\end{lemma}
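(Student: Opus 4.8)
The plan is to bound each term $\Pro{\cE_e}$ by a quantity that decays geometrically in the capacity $c_e$, and then to use the fact that, by construction, the capacities of the successive edges of $S'$ along each of the two branches of $S$ increase geometrically, so that $\sum_{e\in S'}\Pro{\cE_e}$ collapses to a small constant.

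First I would analyze a single edge. Fix $e\in S'$ and let $N_e$ be the number of customers \emph{other than $j$} (the fixed customer of Lemma~\ref{lem:upperbound_noninclusion}) that are assigned, in step~1, a path containing $e$; this is the quantity relevant to the bad event, and restricting to customers other than $j$ makes $\Pro{\cE_e}$ unaffected by the conditioning on $X_{j,S}=1$. By the independence of the choices made in step~1, $N_e$ is a sum of independent $\{0,1\}$ random variables, one per customer, the one for $j'$ having mean $\alpha\sum_{A:e\in A}x^*_{j',A}$; hence $\E{N_e}\le\alpha\sum_{j',A:e\in A}x^*_{j',A}\le\alpha c_e$ by the supply constraint of the LP. Since $\alpha=0.01$ we have $c_e/2>\E{N_e}$, so $\Pro{\cE_e}=\Pro{N_e\ge c_e/2}$ admits a Chernoff/moment bound: the moment inequality $\Pro{N_e\ge k}\le\E{N_e}^{k}/k!$ (with $k=\lceil c_e/2\rceil$) gives $\Pro{\cE_e}\le(\alpha c_e)^{\lceil c_e/2\rceil}/\lceil c_e/2\rceil!$, and for small capacities this is simply Markov's inequality, $\Pro{\cE_e}\le\E{N_e}\le\alpha c_e$.

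Next I would use the structure of $S'$. Along the branch $\ell$, the recursive definition of the $\ell'_i$ forces $c_{\ell'_{i+1}}\le c_{\ell'_i}/2$, so the branch contributes at most one edge whose capacity lies in any fixed dyadic interval $[2^{t},2^{t+1})$. The contributions from $t=0$ and $t=1$ are each at most the Markov estimate $2\alpha$, while for $t\ge 2$ (capacity at least $4$) the moment bound gives a contribution at most $(2e\alpha)^{2^{t-1}}$; the resulting geometric-type series over $t$ sums to a small constant, and doubling for the two branches $\ell,r$ of $S$ keeps $\sum_{e\in S'}\Pro{\cE_e}$ comfortably below $0.575$ for $\alpha=0.01$ (in fact below $0.1$), which is the claimed bound.

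The only delicate point is the numerical bookkeeping: a uniform, crude Chernoff bound $\Pro{\cE_e}\le(2e\alpha)^{c_e/2}$ used for every edge would make the capacity-$1$ term alone worth $(2e\alpha)^{1/2}\approx 0.23$, giving a per-branch sum near $0.29$ and hence roughly $0.58$ over the two branches --- just barely over the target $0.575$. It is therefore important to estimate the smallest-capacity edges ($c_e\in\{1,2\}$) by the sharper Markov/moment bound $\alpha c_e$; once those terms are controlled, the remaining dyadic tail is negligible and the sum falls well under $0.575$.
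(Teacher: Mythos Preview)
Your proposal is correct and follows essentially the same approach as the paper: bound $\Pro{\cE_e}$ by a Chernoff-type estimate decaying exponentially in $c_e$, then exploit the factor-$2$ spacing of the capacities along each branch of $S'$ to sum. The only difference is cosmetic: the paper uses the standard multiplicative Chernoff bound to get $\Pro{\cE_e}<0.231^{c_e}$ (whence $2\sum_{i\ge 0}0.231^{2^i}\le 0.575$, which is the origin of the constant), whereas you use the factorial-moment bound $\Pr[N_e\ge k]\le\mu^k/k!$ and---noticing that your cruder Stirling consequence $(2e\alpha)^{c_e/2}$ overshoots $0.575$ by a hair---patch the small-capacity terms with Markov; either route works, and your patch in fact yields a substantially smaller constant.
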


\medskip

\begin{proof}
Consider an edge $e$.
Define $Z_j=\sum_{A:e\in A} X_{j,A}$. Note that the random variables $Z_1,\ldots,Z_n$ are
independent. 
Then we have
$\Pro{\cE_e}= \Pro{\sum_{j} Z_{j} \geq c_e/2}$. 

%
%

By standard Chernoff bounds (see, e.g., \cite[Theorem 4.1]{MR95}), we get (with $\mu=\sum_j\EE[Z_j]\le\sum_j\sum_{A:e\in A} \EE[X_{j,A}]=\alpha\sum_j\sum_{A:e\in A} x^*_{j,A}\le \alpha c_e$) 
\begin{alignat}{1}
\Pro{\sum_{j} Z_j \geq c_e/2} &= \Pro{\sum_{j} Z_j \geq \bigl(1+\frac{c_e-2\mu}{2\mu}\bigr) \mu } \notag\\
&< \frac{(2\mu)^{c_e/2}e^{c_e/2-\mu}}{c_e^{c_e/2}}\notag\\
&< \left( (2\alpha)^{1/(2\alpha)}
e^\frac{1-2\alpha}{2\alpha}\right)^{\alpha c_e} \label{chernoff with parameters}\\ 
&< 0.231^{c_e} \qquad \qquad
\text{(since $\al=0.01$)} \notag 
\end{alignat} 
where inequality \eqref{chernoff with parameters} is valid for $\alpha < \frac{1}{2}$
since the left hand side is monotonically increasing in $\mu$ for such small
$\mu$. Finally, since the capacities of edges in $S'$ decrease by a factor of 2 along each
of the two branches, we get 
$\sum_{e \in S'} \Pro{\cE_e} \leq 2 \sum_{i \geq 0 } 0.231^{2^{i}} \leq 0.575$.
\end{proof}

Combining Lemma \ref{lem:upperbound_noninclusion} and Lemma
\ref{lem:upperbound_badevents}, we get $\Pro{Y_{j,S} = 1} \geq 0.00425 x^*_{j,S}$. Hence, the
expected weight of $\hx$ is at least $0.00425$ times the weight of $x^*$. 

\begin{remark}
 In the case of uniform capacities, Lemma \ref{lem:upperbound_noninclusion} simplifies to
 just two summands on the right hand side, that is, it is sufficient only to consider bad
 events on the first edge along each branch of $S$. Using this observation, the
 approximation factor can be improved to $0.3$. 
\end{remark}

\end{document}